\documentclass[11p,reqno]{amsart}

\topmargin=0cm\textheight=22cm\textwidth=15cm
\oddsidemargin=0.5cm\evensidemargin=0.5cm
\setlength{\marginparwidth}{2cm}
\usepackage[T1]{fontenc}
\usepackage{graphicx}
\usepackage{amssymb,amsthm,amsmath,mathrsfs,bm,braket,marginnote,physics}
\usepackage{enumerate}
\usepackage{appendix}
\usepackage{extarrows}
\usepackage[colorlinks=true, pdfstartview=FitV, linkcolor=blue, citecolor=blue, urlcolor=blue]{hyperref}
\usepackage{multirow}
\usepackage{xcolor}

\usepackage{pgf}
\usepackage{pgfplots}
\usepackage{tikz}
\usetikzlibrary{arrows,calc}
\usepackage{verbatim}
\usetikzlibrary{decorations.pathreplacing,decorations.pathmorphing}
\usepackage[numbers,sort&compress]{natbib}
\usepackage{dsfont}

\numberwithin{equation}{section}
\linespread{1.2}
\newtheorem{theorem}{Theorem}[section]

\newtheorem{proposition}[theorem]{Proposition}

\theoremstyle{remark}

 \reversemarginpar
 \newcommand{\al}{\alpha}
 \newcommand{\T}{\theta}

  \newcommand{\tp}{p}
    \newcommand{\tq}{q}    
    
      \newcommand{\ta}{\tilde{a}}
    \newcommand{\tb}{\tilde{b}}
    \newcommand{\tc}{\tilde{C}}
    \newcommand{\hc}{\tilde{C}}
    \newcommand{\bC}{\bar{C}}
 \newcommand{\cb}{{c_{\beta}}}
 \newcommand{\bc}{\bar{c}}
 \newcommand{\bcb}{{\bar{c}_\beta}}
\newcommand{\icc}{\tilde{I}^{(c,\bc)}}
\newcommand{\ff}{{_{2}F_{1}}}
\newcommand{\cff}{{_1}F_{1}}
\newcommand{\nx}{e^{2iX/N}}
\newcommand{\ny}{e^{2iY/N}}
\newcommand{\calI}{\mathcal{I}}
\newcommand{\tA}{\tilde{A}^{(\tp+k,\tq)}}

\newcommand{\ot}{O\left(
\frac{1}{N^3}
\right)}
\newcommand{\bJ}{\bar{\mathcal{J}}}
 \reversemarginpar

\newcommand\numberthis{\addtocounter{equation}{1}\tag{\theequation}}

\begin{document}

\title[Circular Jacobi $\beta$-ensemble]{Asymptotic correlations with corrections for the circular Jacobi $\beta$-ensemble}
\author{Peter J. Forrester}
\address{School of Mathematical and Statistics, ARC Centre of Excellence for Mathematical and Statistical Frontiers, The University of Melbourne, Victoria 3010, Australia}
\email{pjforr@unimelb.edu.au}

\author{Shi-Hao Li}
\address{ School of Mathematical and Statistics, ARC Centre of Excellence for Mathematical and Statistical Frontiers, The University of Melbourne, Victoria 3010, Australia}
\email{lishihao@lsec.cc.ac.cn}

\author{Allan K. Trinh}
\address{ School of Mathematical and Statistics, ARC Centre of Excellence for Mathematical and Statistical Frontiers, The University of Melbourne, Victoria 3010, Australia}
\email{a.trinh4@student.unimelb.edu.au}

\subjclass[2010]{15B52, 15A15, 33E20}
\date{}

\dedicatory{}

\keywords{circular Jacobi $\beta$-ensemble;~confluent hypergeometric kernel; Routh-Romanovski polynomials; asymptotic expansion}

\begin{abstract}
Previous works have considered the leading correction term to the scaled limit of various correlation functions and distributions
for classical random matrix ensembles and their $\beta$ generalisations at the hard and soft edge. It has been found that the functional
form of this correction is given by a derivative operation applied to the leading term. In the present work we compute the leading 
correction term of the correlation kernel at the spectrum singularity for the circular Jacobi ensemble with Dyson indices $\beta = 1,2$ and
4, and also to the spectral density in the corresponding $\beta$-ensemble with $\beta$ even. 
The former requires an analysis involving the Routh-Romanovski polynomials, while the latter is based on multidimensional integral
formulas for generalised hypergeometric series based on Jack polynomials.
In all cases this correction term is
found to be related to the leading term by a derivative operation.

\end{abstract}
\maketitle

\section{Introduction}

The establishment of exact limit laws for the spectral statistics of large random matrices from
classical ensembles is a fundamental part of the theory of universality, whereby it is established that the limit laws depend only on low order moments of the matrix entries \cite{EY17}. Thus the limiting distributions exhibited by the classical ensembles are shared by a much wider class of random matrices, and most importantly have the distinguishing property that they can be computed exactly. The first result of this type was obtained by Wigner \cite{Wi55,Wi58}. In these works it was established that for random real symmetric matrices, upper triangular entries independent, with finite variance normalised to $1/2$ for convenience and bounded moments, the global spectral density $\bar{\rho}(\lambda)$  --- obtained by scaling the eigenvalues by dividing by $\sqrt{N}$ --- has the functional form $\rho_{\rm W, 0}(x):={1 \over 2 \pi} (1 - x^2)^{1/2}$, supported on $|x|<1$, nowadays  referred to as the Wigner semi-circle law. This implies that for a suitable class of test functions
$\phi(\lambda) $,
\begin{equation}\label{eq:2.1}
	\lim_{N\to\infty} \int_{-\infty}^\infty \phi(\lambda)\bar{\rho}(\lambda)\, d\lambda = \int_{-1}^{1} \phi(\lambda)\rho_{\rm W,0}(\lambda) \, \, d\lambda.
\end{equation}

The classical ensembles exhibiting the Wigner semi-circle law are the Gaussian orthogonal and unitary ensembles, with matrices $G$ constructed out of standard real and complex Gaussian random matrices $X$ according to $G = 
2^{-3/2}(X + X^\dagger)$ --- here the normalisation $2^{-3/2}$ ensures the variance of the upper triangular entries equals $1/2$. The special mathematical structures associated with the classical ensembles permit the determination of the rate of convergence to the Wigner law. For example, with $\phi = \phi_x$, where
$\phi_x(\lambda) = 1$ for $\lambda \le x$,  $\phi_x(\lambda) = 0$ otherwise, it has been proved by G\"otze and Tikhomirov \cite{GT05} for the GUE, and by Kholopov et al.~\cite{KTT08} in the case of the GOE, that there is a constant $C$ independent of $N$ such that
\begin{equation}\label{eq:2.1a}
\sup_{x} \Big |  \int_{-\infty}^\infty \phi_x(\lambda)( \bar{\rho}(\lambda) - \rho_{\rm W,0}(\lambda) ) \,d \lambda \Big | \le {C \over N},
\end{equation}
valid for all $N$. Most significantly, this $O(1/N)$ rate is expected to hold for the wider class of random matrices stated above (the Wigner class) in relation to (\ref{eq:2.1}).
In keeping with the convergence rate exhibited in (\ref{eq:2.1a}),
for $\phi(\lambda)$ in (\ref{eq:2.1}) smooth, large $N$ expansions of the form
\begin{equation}\label{eq:2.1b}
 \int_{-\infty}^\infty  \phi(\lambda)\bar{\rho}(\lambda)\, d\lambda = \int_{-1}^{1} \phi(\lambda)\rho_{\rm W,0}(\lambda) \, \, d\lambda +
{1 \over N} \int_{-\infty}^{\infty} \phi(\lambda)\rho_{\rm W,1}(\lambda) \, \, d\lambda + \cdots
\end{equation}
are well known in the context of the loop equation analysis of the Gaussian ensembles \cite{WF14}. Moreover (\ref{eq:2.1b}) can be extended to the Wigner class, with $\rho_{\rm W,1}(\lambda)$ now dependent on the fourth moment, but no higher moments; see the introduction to \cite{FT19c} for further discussion and references. The point to note then is that the rate of convergence in this setting holds true generally, while the density function at this order weakly depends on the details of the distribution of the entries.

An effect very similar, but even more structured, was observed by Edelman, Guionnet and P\'ech\'e \cite{EGP16} in relation to hard scaling for certain ensembles of positive
definite matrices $X^\dagger X$, where the entries of $X$ -- itself rectangular of size $n \times N$ -- are independent with zero mean. Since the spectral density is strictly zero for negative values, yet the eigenvalue density
is nonzero for $x>0$, the origin is referred to as a hard edge. Hard edge scaling refers to a rescaling of the eigenvalues so that the spacings between eigenvalues in the neighbourhood
of the origin is of order unity. The limiting state depends on whether the entries of the matrix are real or complex (typically labelled by the Dyson index $\beta=1$ and $\beta = 4$), and the parameter $a=n-N$ which
specifies the number of extra rows relative to columns in $X$. With $E_{N,\beta}(0;J;a)$ denoting the probability that the domain $J \subset \mathbb R^+$ contains no eigenvalues, it was conjectured in
\cite{EGP16} that
\begin{equation}\label{eq:2.1c}
E_{N,2}(0;(0,s/4N);a) = E_2^{\rm hard}(0;(0,s);a) + {a \over 2 N} s {d \over ds} E_2^{\rm hard}(0;(0,s);a) + O\Big ( {1 \over N^2} \Big ),
\end{equation}
where
$$
 E_\beta^{\rm hard}(0;(0,s);a) = \lim_{N \to \infty} E_{N,\beta}(0;(0,s/4N);a).
 $$
 Proofs were subsequently given in \cite{PS16,Bo16}. With $\Sigma$ a fixed positive definite matrix, Hachen et al.~\cite{HHN16} gave a generalisation of (\ref{eq:2.1c}) for the product $X^\dagger \Sigma X$, and found
 that the structured form of the $1/N$ correction persists but with $a$ renormalised. Such a modification was also found in \cite{EGP16} for the setting that the Gaussian entries $X$ are in convolution with a further
 distribution giving rise fourth moments differing from the Gaussian --- that latter deviation shows as an additive shift to $a$.
 A number of recent works \cite{FT19,MMM19,FL20} have focussed attention on generalising (\ref{eq:2.1c}), involving both the parameter $\beta$ and the details of the ensemble giving rise to the hard edge. For the classical
 $\beta$ ensembles with a hard edge -- namely the Laguerre and Jacobi weights -- it is found in all cases considered that the leading correction term is related to the limiting distribution by a derivative operation. 
 Moreover, this  effect is also exhibited at the soft edge of the Gaussian and Laguerre ensembles \cite{FFG06,JM12,Ma12,FT18,FT19a}, now with the leading order correction proportional to $1/N^{1/3}$.
 Thus these recent studies have revealed that the leading order correction to the limiting hard and soft edge distributions for a number of classical random matrix ensembles is structured, and that the rate of convergence
 is not dependent on the details of the ensemble. 
 
 In this work we contribute to the study of this effect by considering convergence to the limit for a singularity type distinct from the hard and soft edge --- namely the spectrum singularity
 \cite[\S 3.9]{Fo10}. This singularity is exhibited at infinity by the classical Cauchy $\beta$-ensemble,
 itself specified by the joint eigenvalue probability density function (PDF) proportional to  
\begin{equation} \label{CyBetaPDF}
\prod_{j=1}^N \omega_\beta(x) \prod_{1\leq j<k\leq N} |x_k-x_j|^\beta, \quad -\infty<x_j<\infty,
\end{equation} 
where 
\begin{equation} \label{cb}
\omega_\beta(x)=(1-ix_j)^{\cb}(1+ix_j)^{\bcb}, \qquad c_\beta = -\beta (N+p-1)/2 - 1 + iq.
\end{equation}
Here $p,q$ are fixed constants, and $\bcb$ denotes the complex conjugate of $\cb$. However the reason for the name ``spectrum singularity'' does not become apparent until applying
a particular fractional linear transformation to map the line to the unit circle
\begin{align} \label{sp}
x_j =  i {1 + e^{i \theta_j} \over 1 - e^{i \theta_j}}.
\end{align}
Now the spectrum singularity at infinity presents itself at  $\T=0$ on the unit circle in a more literal sense,
with the resulting joint eigenvalue PDF proportional to
\begin{align} \label{cJBetaPDF}
\prod_{j=1}^N e^{-q\T_j}
|1 - e^{i\T_j}|^{\beta p}\prod_{1\leq j<k\leq N}
\abs{e^{i\T_k}-e^{i\T_j}}^\beta, \quad \T_j\in [0,2\pi).
\end{align}
Following terminology introduced in \cite[\S 3.9]{Fo10}, this is said to specify the generalised circular Jacobi $\beta$ ensemble.
The case $p=q=0$ is recognised as the circular $\beta$ ensemble; see e.g.~\cite[\S 2.8]{Fo10}. The case $q=0$, through the
factor $\prod_{j=1}^N  
|1 - e^{i\T_j}|^{\beta p}$, introduces   the spectrum
singularity at $\theta = 0$. The discontinuity created by the further factor $\prod_{j=1}^N e^{-q\T_j}$ in the case $q \ne 0$ 
extends the spectrum singularity to the class of singularities first considered by Fisher and Hartwig in the theory of asymptotics
of Toeplitz determinants \cite{FH68}.

The limiting distribution in the neighbourhood of the spectrum singularity was first carried out in the case $\beta = 2$ \cite{NS93}.
This is the simplest case as it gives rise to a determinantal point process, meaning that all $k$-point correlation functions are determined
by a single kernel function; see e.g.~\cite[Ch.~5]{Fo10}. The extension to the case with $q\ne 0$ was given in \cite{BO01}; see also
the introduction to \cite{DKV11} for further references. For $\beta = 1$ and 4 the correlations now have a Pfaffian structure, and are
fully determined by a $2 \times 2$ correlation kernel. This itself is fully determined by what we will term the Christoffel-Darboux kernel.
The scaled limit in the neighbourhood of the spectrum singularity has been given in \cite{FN01}. In the present work this is extended to the
case $q \ne 0$. Moreover, we find the leading correction to the scaled limit.

\begin{proposition} \label{Main1}
Let
\begin{equation} \label{zx}
z(X) = i\frac{1+\nx}{1-\nx}.
\end{equation}
For $\beta = 1,2$ and $4$, the Christoffel-Darboux kernel $S_{\beta,N}(x,y)$ associated with the PDF \eqref{cJBetaPDF} admits the large $N$ expansion
\begin{equation} \label{Main1_Expansion}
S_{N,\beta}(z(X),z(Y)) {dz \over dX} 
= 
\hat{K}_{\infty,\beta}^{(\tp,\tq;0)}(X,Y) + \frac{1}{N}\hat{L}_{1,\beta}^{(\tp,\tq;0)}(X,Y) + O\left( {1\over N^2} \right),
\end{equation}
where $\hat{K}_{\infty,\beta}^{(\tp,\tq;0)}(X,Y)$, $\hat{L}_{1,\beta}^{(\tp,\tq;0)}(X,Y)$ are specified in \eqref{Beta2_K}, \eqref{Beta2_L}  for $\beta =2$,
in (\ref{L11}) for $\beta = 1$, and in (\ref{L14}) for $\beta = 4$.
Moreover,
\begin{align} \label{Main1_Derivative}
\hat{L}_{1,\beta}^{(\tp,\tq;0)}(X,Y)
=
\tp\left(
X\frac{\partial}{\partial X}+Y\frac{\partial}{\partial Y}+1
\right)
\hat{K}_{\infty,\beta}^{(\tp,\tq;0)}(X,Y).
\end{align}
\end{proposition}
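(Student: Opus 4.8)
The plan is to treat the determinantal case $\beta = 2$ first, since for $\beta = 1$ and $4$ the full $2\times 2$ correlation kernel is built from the Christoffel-Darboux kernel by the standard skew-orthogonal polynomial construction. For $\beta = 2$ the kernel $S_{N,2}(x,y)$ is the Christoffel-Darboux sum of the orthogonal polynomials associated with the weight $(1-ix)^{c}(1+ix)^{\bar c}$ underlying \eqref{cJBetaPDF}; these are the Routh-Romanovski polynomials, and they carry an explicit Gauss hypergeometric ($\ff$) representation. First I would insert this representation into the Christoffel-Darboux formula, substitute $x = z(X)$ with $\nx$ as in \eqref{zx}, and include the Jacobian $dz/dX$.

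The heart of the argument is the large-$N$ asymptotics in the scaling regime. Expanding \eqref{zx} gives $z(X) = -N/X + X/N + O(1/N^2)$, so that the spectrum singularity at $x=\infty$ is probed at fixed $X$. Under this scaling the degree-$(N\!-\!1)$ and degree-$N$ Routh-Romanovski polynomials undergo a confluence $\ff \to \cff$, producing the confluent hypergeometric kernel $\hat{K}_{\infty,2}^{(\tp,\tq;0)}$ at leading order. The next step is to retain the $O(1/N)$ terms in three places: the correction to $z(X)$ above, the correction to the Jacobian $dz/dX$, and the subleading term in the confluent limit of $\ff$. Collecting these and resumming the Christoffel-Darboux formula yields $\hat{L}_{1,2}^{(\tp,\tq;0)}$, establishing \eqref{Main1_Expansion} for $\beta = 2$.

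To obtain the structural identity \eqref{Main1_Derivative}, I would read the operator $X\partial_X + Y\partial_Y + 1$ as the generator at $\lambda = 1$ of the scaling $K(X,Y) \mapsto \lambda K(\lambda X, \lambda Y)$, which is precisely the transformation that preserves the trace $\int K(X,X)\,dX$. The claim is then that the entire $O(1/N)$ correction is reproduced by an effective dilation $X \mapsto (1+\tp/N)X$, $Y\mapsto (1+\tp/N)Y$ of the leading kernel together with an overall weight $(1+\tp/N)$ from the Jacobian. Concretely one checks that the parameter $\tp$ (the strength of the singularity $|1-e^{i\T}|^{\beta\tp}$) enters the hypergeometric arguments and the expansion of $z(X)$ only through this common dilation, while $\tq$ rides along inside $\hat{K}_{\infty,2}^{(\tp,\tq;0)}$ without a separate rescaling at this order; the $1/N$ coefficient then collapses to $\tp(X\partial_X + Y\partial_Y + 1)\hat{K}_{\infty,2}^{(\tp,\tq;0)}$. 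In practice this can also be verified directly by differentiating the explicit confluent hypergeometric form of $\hat{K}_{\infty,2}^{(\tp,\tq;0)}$ and matching with $\hat{L}_{1,2}^{(\tp,\tq;0)}$.

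Finally, for $\beta = 1$ and $4$ the $2\times 2$ matrix kernel is obtained from the Christoffel-Darboux kernel $S_{N,\beta}$ by the integration and differentiation operations dictated by the Pfaffian structure. Since each such operation intertwines with the Euler operator $X\partial_X + Y\partial_Y$ in a controlled way, the derivative relation established for $\beta = 2$ propagates to the scalar building block of the $\beta = 1, 4$ kernels, giving \eqref{Main1_Derivative} in those cases as well. I expect the main obstacle to be the uniform $1/N$ asymptotics of the Routh-Romanovski polynomials near the singularity, carried to sufficient order and correctly resummed in the Christoffel-Darboux sum; tracking this correction through the integral operations defining the $\beta = 1, 4$ kernels is the secondary difficulty.
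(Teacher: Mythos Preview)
Your $\beta=2$ plan is essentially the paper's: Routh--Romanovski polynomials in the Christoffel--Darboux formula, the confluent limit $\ff\to\cff$ under the scaling $x=z(X)$, and direct verification of \eqref{Main1_Derivative} via the explicit confluent hypergeometric form. (The paper uses the linear transformation \eqref{TF} so that the hypergeometric argument becomes $1-\nx$ and then expands the Pochhammer symbols $(-N+k)_\alpha$ rather than Taylor-expanding $z(X)$ itself, but this is a technical variant, not a different idea.)

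The genuine gap is in your $\beta=1,4$ paragraph. The integration/differentiation operations from the Pfaffian structure produce $I_{N,\beta}$ and $D_{N,\beta}$ from $S_{N,\beta}$; they do \emph{not} produce $S_{N,1}$ or $S_{N,4}$ from $S_{N,2}$. So there is no ``propagation'' mechanism of the type you describe. What is actually used (formulas \eqref{s1n}, \eqref{s4n} from the Adler--Forrester--Nagao--van Moerbeke skew-orthogonal theory) is that $S_{N,1}$ equals a reweighted $S_{N-1,2}$ with shifted parameters \emph{plus} a rank-one piece $\gamma_{N-2}\,\tilde\omega_1(y)\icc_{N-1}(y)\int\mathrm{sgn}(x-t)\icc_{N-2}(t)\tilde\omega_1(t)\,dt$, and similarly $S_{N,4}$ equals a reweighted $S_{2N,2}$ plus a rank-one integral piece. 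The $\beta=2$ result handles only the first summand in each case; the rank-one integral terms must be expanded to order $1/N$ independently, and the derivative identity \eqref{Main1_Derivative} must then be checked separately for those extra terms (this is where the contiguous relations for $\cff$ are used again). Without identifying this decomposition and treating the rank-one pieces, the $\beta=1,4$ cases are not established.
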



The structural form of the $1/N$ correction in \eqref{Main1_Derivative} as a simple first order derivative operation of the limiting kernel in \eqref{Main1_Derivative} is precisely as found for the analysis
of the rate of convergence to the hard and soft edge limits. It
 implies that the optimally tuned scaling 
\begin{equation}
z(X) \mapsto i\frac{1+ \exp( {2iX/( N + \tp)} ) }{1- \exp( {2iX/ (N + \tp)} )}
\end{equation}
gives a convergence at the rate $O(1/N^2)$. In the case of the circular ensembles $p=q=0$, it implies that the leading order correction term is
$O(1/N^2)$ without any need for a shift; see \cite{BBLM06,FM15,BFM17} for an application of this fact in the case $\beta = 2$ to the analysis
of the statistical properties of the large Riemann zeros.

Analysis beyond the classical values $\beta=1,2$ and $4$ is also possible in restricted circumstances.
In particular, exact evaluations of the spectral density of \eqref{cJBetaPDF}, $\rho_{N,\beta}(\T)$, for $\beta$ even are known in the framework of the Selberg integral theory and its associated special functions,
allowing for the derivation of the following large $N$ expansion.

\begin{proposition} \label{Main2}
\label{rhoExpand}
For $\beta$ even and $N$ large, we have
\begin{equation}
{1\over N} \rho_{N,\beta}\left(
{\theta \over N} \right)
=
\rho_{\infty,\beta}(\theta) + {1\over N} \tp {d\over d\theta} [\theta \rho_{\infty,\beta}(\theta)]
+ O\left(
{1\over N^2} \right),
\end{equation}
where
\begin{equation}
\rho_{\infty,\beta}(\theta) = \lim_{N\to\infty} {1\over N} \rho_{N,\beta}\left(
{\theta \over N} \right)
\end{equation}
is an explicit $\beta$-dimensional integral given by \eqref{RhoInf} below.
\end{proposition}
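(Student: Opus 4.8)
The plan is to begin from the exact finite-$N$ evaluation of the spectral density $\rho_{N,\beta}(\theta)$ that is available for $\beta$ even through the Selberg integral theory. In this framework $\rho_{N,\beta}(\theta)$ is expressed as a gamma-function prefactor $A_N$ times the single-particle weight $e^{-\tq\theta}|1-e^{i\theta}|^{\beta\tp}$ times a $\beta$-dimensional integral $\mathcal{I}_N(\theta)$ representing a generalised hypergeometric function built from Jack polynomials, with the $N$-dependence residing in $A_N$ and in the parameters of $\mathcal{I}_N$. This is precisely the object whose rescaled large-$N$ limit is the integral \eqref{RhoInf}. I would record this representation first, isolating every factor that carries $N$.

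Next I would insert the scaling $\theta\mapsto\theta/N$ and multiply by $1/N$, and extract the leading asymptotics. Three mechanisms produce the limit: Stirling's formula applied to $A_N/N$; the expansion of the weight, where $|1-e^{i\theta/N}|^{\beta\tp}=(2\sin(\theta/2N))^{\beta\tp}=(\theta/N)^{\beta\tp}(1+O(1/N^2))$ supplies the matching power of $N$ while $e^{-\tq\theta/N}=1-\tq\theta/N+O(1/N^2)$; and the confluent limit of $\mathcal{I}_N(\theta/N)$, whose $N$-dependent exponents combine with the $1/N$ scaling so that each factor of the form $(\,\cdot\,)^{cN}$ exponentiates to a finite limit. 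Collecting these reproduces \eqref{RhoInf}, establishing $\lim_{N\to\infty}N^{-1}\rho_{N,\beta}(\theta/N)=\rho_{\infty,\beta}(\theta)$.

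For the $1/N$ correction I would organise the computation around the optimal-scaling principle anticipated in the discussion after Proposition~\ref{Main1}: that the genuine large-parameter is $N+\tp$ rather than $N$. Concretely, I would verify from the exact representation that rescaling by $N+\tp$ removes the order $1/N$ term, i.e. $\tfrac{1}{N+\tp}\rho_{N,\beta}\big(\tfrac{\theta}{N+\tp}\big)=\rho_{\infty,\beta}(\theta)+O(1/N^2)$. Granting this, the derivative structure is automatic: writing the $N$-scaling in terms of the $(N+\tp)$-scaling gives the exact identity
\[
\frac{1}{N}\rho_{N,\beta}\!\Big(\frac{\theta}{N}\Big)
=\Big(1+\frac{\tp}{N}\Big)\,\frac{1}{N+\tp}\,\rho_{N,\beta}\!\Big(\frac{\theta\,(1+\tp/N)}{N+\tp}\Big),
\]
and Taylor-expanding the right-hand side with the optimal-scaling estimate yields $\rho_{\infty,\beta}(\theta)+\tfrac{\tp}{N}\big(\rho_{\infty,\beta}(\theta)+\theta\,\rho_{\infty,\beta}'(\theta)\big)+O(1/N^2)$, which is exactly $\rho_{\infty,\beta}(\theta)+\tfrac{\tp}{N}\tfrac{d}{d\theta}[\theta\rho_{\infty,\beta}(\theta)]$. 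A fully equivalent route is to expand $\mathcal{I}_N(\theta/N)$, the weight, and $A_N$ directly to order $1/N$ and recognise the collected correction integral as $\tp(\theta\partial_\theta+1)$ applied to the leading integrand.

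The main obstacle is the verification that $N$ enters the exact representation only through the combination $N+\tp$ up to $O(1/N^2)$ --- equivalently, in the direct route, that the disparate order-$1/N$ contributions from the Stirling expansion of $A_N$, from $e^{-\tq\theta/N}$, and from the $N$-dependent hypergeometric parameters all coalesce into the single operator $\tp(\theta\partial_\theta+1)$ acting on $\rho_{\infty,\beta}$. In particular one must check that the $\tq$-dependent piece $-\tq\theta/N$ is consistent with (rather than additional to) the $\theta$-derivative of the $\tq$-dependent limit $\rho_{\infty,\beta}$, and that the coefficient of the operator is precisely $\tp$, tracing back to the exponent $\beta\tp$ of the spectrum-singularity factor $|1-e^{i\theta}|^{\beta\tp}$. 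This bookkeeping of the gamma-function and parameter asymptotics is where the real labour lies; once it is done, the derivative form follows from the elementary rescaling above.
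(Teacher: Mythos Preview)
Your outline is correct and aligns with the paper's strategy, but it contains a circularity and omits the one substantive step.

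The circularity: you propose to first verify that $\tfrac{1}{N+p}\rho_{N,\beta}\big(\theta/(N+p)\big) = \rho_{\infty,\beta}(\theta) + O(1/N^2)$ and then recover the derivative form by rescaling. But verifying this optimal-scaling statement is \emph{equivalent} to computing the $1/N$ term directly and checking it equals $p\,\tfrac{d}{d\theta}\big(\theta\rho_{\infty,\beta}\big)$ --- there is no independent way to see that $N$ enters only through $N+p$, since the gamma-function prefactor, the hypergeometric parameter $-N$, and the weight each carry $N$ separately and in incompatible ways. Your rescaling identity is thus a restatement of the conclusion, not a route to it, as your final paragraph effectively concedes.

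What is actually missing is the mechanism for the ``direct route'' you allude to. In the paper, after Taylor-expanding the factor $\big(1+(1-e^{-i\theta/N})e^{i\theta_j}\big)^N$ inside the $\beta$-dimensional integral, the order-$1/N$ correction involves averages of $\sum_j e^{i\theta_j}$ and $\sum_j e^{2i\theta_j}$ against the limiting integrand. The key device is integration by parts in the angular variables $\theta_j$, exploiting
\[
\frac{\partial}{\partial\theta_j}\prod_{l=1}^\beta e^{i\theta e^{i\theta_l}} = -\theta\, e^{i\theta_j}\prod_{l=1}^\beta e^{i\theta e^{i\theta_l}},
\qquad
\frac{\partial^2}{\partial\theta_j^2}\prod_{l=1}^\beta e^{i\theta e^{i\theta_l}} = \big(-\theta e^{i\theta_j}+\theta^2 e^{2i\theta_j}\big)\prod_{l=1}^\beta e^{i\theta e^{i\theta_l}},
\]
which converts those averages into $\theta I_\infty'(\theta)$ plus multiples of $\theta I_\infty(\theta)$. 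The multiples of $\theta I_\infty(\theta)$ then combine with the $1/N$ contributions from the weight $e^{-q\theta/N}|1-e^{i\theta/N}|^{\beta p}$ and from the Stirling expansion of the Morris-integral normalisation to leave exactly $p(\theta\partial_\theta+1)\rho_{\infty,\beta}$. This integration-by-parts step is the real content of the proof; without it the ``bookkeeping'' you describe cannot be closed, because the raw correction integrals have no manifest relation to $\partial_\theta I_\infty$.
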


As in Proposition \ref{Main1}, the term proportional to $1/N$ is a simple derivative operation of the limiting density and therefore trivially eliminated by tuning the hard edge scaling to $x/(N+p)$.
Thus the leading non-trivial correction term is $O(1/N^2)$. The same prescription was also given in \cite{FT19} at the hard edge for the Laguerre $\beta$-ensemble for even $\beta$ and the associated parameter $a>-1$.

We introduce the finite $N$ formulas which enable the derivation of Propositions \ref{Main1} and \ref{Main2} in Section \ref{S2a}.
The actual asymptotics is carried out in Section \ref{Sec2}, with some of the more technical working deferred to Appendix A.

\section{Preliminaries}\label{S2a}
\subsection{Correlations for $\beta = 2$}


The PDF \eqref{CyBetaPDF} with Dyson index $\beta = 2$   is referred to as the Cauchy unitary ensemble (CyUE) and 
has the special feature of being  an example of an orthogonal polynomial determinantal point process. This means that its general $k$-point correlation function is given by a $k \times k$ determinant with entries determined by a single kernel function $S_{n,2}(x,y)$ involving orthogonal polynomials, denoted $\icc_n(x)$, associated with the weight \eqref{cb}. Explicitly, the $k$-point correlation function is
\begin{align} \label{RkDet}
	R_{N,2}^{(k)}(x_1, \ldots, x_k) = \det[ S_{N,2}(x_m,x_n) ]_{m,n=1}^k,
\end{align}
where
\begin{align} \label{CyKernel} 
	S_{N,2}(x,y)&=(\omega_2(x)\omega_2(y))^{1/2}\sum_{k=0}^{N-1}\frac{1}{h_k}{\icc_k(x)\icc_k(y)} \nonumber
	\\
	&=\frac{(\omega_2(x)\omega_2(y))^{1/2}}{h_{N-1}}\frac{\icc_N(x)\icc_{N-1}(y)-\icc_N(y)\icc_{N-1}(x)}{x-y}.
\end{align}
The second equality in \eqref{CyKernel} is the result of the Christoffel-Darboux summation; see e..g.~\cite[Prop.~5.1.3]{Fo10}.

The orthogonal polynomials $\icc_n(x)$ in \eqref{CyKernel} are required to be monic and have the orthgonality
\begin{align} \label{RROrt}
\int_{-\infty}^\infty \omega_2(x) \icc_n(x)\icc_m(x)dx=h_n\delta_{n,m}
\end{align}
for some normalisation $h_n > 0$. The polynomials with this property 
are known as the Routh-Romanovski polynomials (see e.g.~the review \cite{RWAK07}).
They have the explicit hypergeometric form
\begin{equation}\label{hyper}
\icc_n(x) = (-2i)^n 
{
\Gamma(c+\bc+n+1) \Gamma(n+c+1) \over \Gamma(c+\bc+2n+1) \Gamma(c+1)
}
\,_2 F_1 \left(-n, n+1+c+\bc ; c+1 ; { 1-ix \over 2} \right),
\end{equation}
and furthermore the normalisation has the gamma function form
\begin{align} \label{hn}
h_n=2^{2n+2+c+\bc} \pi
\frac{\Gamma(n+1)\Gamma(-c-\bc-2n)\Gamma(-c-\bc-2n-1)}{\Gamma(-c-\bc-n)\Gamma(-c-n)\Gamma(-\bc-n)}.
\end{align}
It is furthermore true that
\eqref{hyper} can be viewed as certain monic Jacobi polynomials $i^{-n} P_n^{(c,\bc)}(ix)$ of degree $n$.

Bijectively related to the CyUE by the fractional linear transformation \eqref{sp} is the eigenvalue PDF of the circular Jacobi unitary ensemble (cJUE) given by \eqref{cJBetaPDF} with $\beta=2$. Its corresponding kernel function can therefore be attained from \eqref{CyKernel} under the same transformations. Consequently, local behavior near the spectrum singularity will involve the Routh-Romanovski polynomials \eqref{hyper} mapped onto the unit circle.
As detailed in \cite{FN01} in the case $q=0$, and to be extended to include the cases $q \ne 0$ in Section \ref{S3.1} below, its asymptotics are evaluated from the confluent limit formula
\begin{equation} \label{conflimit}
\lim_{n\to\infty} \ff(-n,b;c;t/n)=\cff(b;c;-t).
\end{equation}
This limit formula follows immediately from the series forms
\begin{equation} \label{conflimit1}
\,_2 F_1 \left(a ,b; c ; z \right) = 
\sum_{\alpha=0}^\infty { (a)_\alpha (b)_\alpha \over (c)_\alpha } { z^\alpha \over \alpha! }, \quad \,_1 F_1 \left(a ; c ; z \right) = 
\sum_{\alpha=0}^\infty { (a)_\alpha \over (c)_\alpha } { z^\alpha \over \alpha! },
\end{equation}
where $(n)_\al=\Gamma(n+\al) / \Gamma(n)$ denotes the Pochhammer symbol.

From \eqref{CyKernel} and with $z(X)$ given in \eqref{zx},
\begin{equation} \label{SNlim}
\lim_{N \to\infty} S_{N,2}( z(X) , z(Y) ) {dz(X) \over dX } = K_{\infty,2}^{(\tp,\tq;0)}( X , Y ) 
\end{equation}
where
\begin{equation} \label{KInf2}
K_{\infty,2}^{(\tp,\tq;0)}( X , Y ) 
\propto
e^{-i(X+Y)}
{ (XY)^{(\tp+1)} \over X^2(X-Y) }
\left( X{\tc}^{(p,q,1)}_0(X)\tc^{(p,q,0)}_0(Y)- (X\leftrightarrow Y)
\right)
\end{equation}
with
\begin{equation} \label{tc0}
\tc_0^{(p,q,k)}(X) = \cff(p+k-iq;2p+2k;2iX);
\end{equation}
hence the name confluent hypergeometric kernel used in \cite{DKV11}.
The explicit proportionality is given in  \eqref{Beta2_K}.
 In the case $q=0$, from the formula
\begin{equation}
\cff(p;2p;2iX) = \Gamma(p+1/2) \left( {x \over 2} \right)^{-p+1/2} e^{iX} J_{p-1/2}(X) 
\end{equation}
$(X/Y) K_{\infty,2}^{(\tp,0;0)}( X , Y )$ simplifies to \cite{NS93}
\begin{equation} \label{JKer}
(XY)^{1/2} { J_{p+1/2}(X)J_{p-1/2}(Y)-J_{p-1/2}(X)J_{p+1/2}(Y) \over 2(X-Y) }.
\end{equation}
 In the special case when $p,q=0$, there is the absence of the singularity in \eqref{cJBetaPDF}, which is then recognised
 as the eigenvalue PDF for the circular unitary ensemble (CUE) of random Haar distributed unitary matrices;
 see e.g.~\cite{DF17}. Using the Bessel function formula
\begin{equation}
J_{1/2}(X) = \left( {2 \over \pi X} \right)^{1/2} \sin X,
\quad
J_{-1/2}(X) = \left( {2 \over \pi X} \right)^{1/2} \cos X
\end{equation}
in \eqref{JKer}, one recovers the sine kernel
$$
{\sin \pi(X-Y) \over \pi(X-Y)},
$$
well known for describing bulk statistics in the CUE.

\subsection{Correlations for $\beta = 1,4$}
From the viewpoint of classical random matrix theory the Cauchy ensemble \eqref{CyBetaPDF} for the values $\beta=1,4$ is associated  with orthogonal and symplectic symmetries in the corresponding matrix models. The point processes 
of the eigenvalues are Pfaffian; see e.g.~\cite[Ch.~6]{Fo10}. For such ensembles with classical weights such as the generalised Cauchy ensemble, the $k$-point correlation function admits a Pfaffian with entries involving summations of skew orthogonal polynomials. For the respective $\beta$ values $1,4$,  
\begin{align} \label{RkB14}
R^{(k)}_{N,\beta}(x_1,\cdots,x_k)
=\text{Pf}\left[
\begin{array}{cc}
S_{N,\beta}(x_i,x_j)&I_{N,\beta}(x_i,x_j)\\
D_{N,\beta}(x_i,x_j)&S_{N,\beta}(x_j,x_i)
\end{array}
\right]_{i,j=1,\cdots,k}.
\end{align}
The functions $S_{N,\beta}(x,y)$ in \eqref{RkB14} are related to the kernel \eqref{CyKernel} via the relations \cite{AFNM00}
\begin{multline}\label{s1n}
S_{N,1}(x,y)=
\sqrt{ {\omega_2(x) \over \omega_2(y)} }
\frac{\tilde{\omega}_1(y)}{\tilde{\omega}_1(x)}S_{N-1,2}(x,y)
\\
+\frac{1}{2}\gamma_{N-2}\tilde{\omega}_1(y)\icc_{N-1}(y)\int_{-\infty}^\infty \text{sgn}(x-t)\icc_{N-2}(t)\tilde{\omega}_1(t)dt
\quad \text{for } N \text{ even},
\end{multline}
and
\begin{align}\label{s4n}
\begin{aligned}
S_{N,4}(x,y)&=\frac{1}{2}\sqrt{
\frac{\omega_2(y)\tilde{\omega}_4(x)}{\tilde{\omega}_4(y)\omega_2(x)}
}
S_{2N,2}(x,y)\\
&\quad-\frac{1}{2}\gamma_{2N-1}\frac{\omega_2(y)}{\sqrt{\tilde{\omega}_4(y)}}\icc_{2N}(y)
\int_x^\infty \icc_{2N-1}(t)\frac{\omega_2(t)}{(\tilde{\omega}_4(t))^{1/2}}dt.
\end{aligned}
\end{align}
Other functions in \eqref{RkB14} are
\begin{align*}
I_{1,N}(x,y)&=-\int_x^y S_{1,N}(x,z)dz-\frac{1}{2}\text{sgn}(x-y),\quad
I_{4,N}(x,y)=-\int_x^y S_{4,N}(x,z)dz,
\\
D_{\beta,N}(x,y)&=\frac{\partial}{\partial x}S_{\beta,N}(x,y).
\end{align*}
In the above formulas, $\gamma_j=(p-1-j)/h_j$ with $h_j$ given by \eqref{hn},
and the weights in \eqref{s1n}, \eqref{s4n} are obtained by identifying the Pearson pair $(f,g)=(1+x^2, -2(N+p)x+2q)$ of $\omega_2$ given in \eqref{cb}.
According to  \cite[eq. (2.17)]{AFNM00} we therefore have
\begin{align} 
\begin{aligned}\label{Weight14}
&\tilde{\omega}_1(x)\quad=\quad\sqrt{\omega_2(x)/f(x)}
\quad=(1-ix)^{(c-1)/2}(1+ix)^{(\bc-1)/2},
\\
&\tilde{\omega}_4(x)\quad=\quad\quad\omega_2(x)f(x)
\quad\,\,\,=(1-ix)^{c+1}(1+ix)^{\bc+1}.
\end{aligned}
\end{align}
By equaling the weights \eqref{Weight14} and \eqref{cb} requires that the parameter $c$ in \eqref{s1n}, \eqref{s4n} must be
$$
\Re c = -N-p \quad\text{ and }  \quad \Im c = 2q
$$ 
for $\beta=1$ and 
$$
\Re c = -2N-2p\quad \text{ and } \quad \Im c = q
$$ 
for $\beta=4$.
We remark that similar Pffafian formulas for the case $\beta=1$ and $N$ odd can be stated (see \cite[eq. (2.11)]{AFNM00}).


Explicit asymptotic forms under the scaling \eqref{zx} in the neighbourhood of the spectrum singularity case $q=0$ of the functions \eqref{s1n}, \eqref{s4n}  have the known limit \cite{FN01}
$$
\lim_{N\to\infty} S_{N,\beta}(z(X),z(Y)) {dz(X) \over dX} = S_{\infty,\beta}(X,Y)
$$
where
\begin{multline}
S_{\infty,1}(X,Y)
=S_{\infty,2}(X,Y) \rvert_{p \to p+1 }
+\pi {\Gamma(p/2+1) \over \Gamma(p/2+1/2) }  { J_{p+1/2}(Y) \over (2Y)^{1/2} }
\\
\times
\left( 1- 2^{1/2} {\Gamma(p/2+3/2) \over \Gamma(p/2+1) }  \int_0^{X} s^{-1/2}	J_{p+3/2}(s) \, ds \right)
\end{multline}
and
\begin{equation}
S_{\infty,4}(X,Y)
=S_{\infty,2}(2X,2Y) \rvert_{p \to 2p}
-\pi p {J_{2p-1/2}(2 Y) \over  Y^{1/2} }
\int_0^{X} s^{-1/2}	J_{2p+1/2}(2 s) \, ds. 
\end{equation}

\subsection{The density for even $\beta$}
For even $\beta$, as with other classical $\beta$-ensembles, the spectral density of the circular Jacobi ensemble can be expressed in terms of generalised hypergeometric functions based on Jack polynomials with $\beta$ variables \cite[Ch.~13]{Fo10}. These exact results are known in the framework of Selberg integral theory. They involve the generalised hypergeometric function
$$
\,_p F_q^{(\alpha)}(a_1,\ldots,a_p;b_1,\ldots,b_q;x_1,\ldots,x_m)
=
\sum_{k = (k_1,\ldots,k_m)}
{1 \over \abs{k}!}
{[a_1]_k^{(\alpha)} \cdots [a_p]_k^{(\alpha)} \over [b_1]_k^{(\alpha)} \cdots [b_q]_k^{(\alpha)} }
C_k^{(\alpha)}(x_1,\ldots,x_m)
$$
where, with $k$ a partition of no more than $m$ parts, 
$$
\abs{k} = \sum_{j=1}^m k_j, \quad [a]_k^{(\alpha)} = \prod_{j=1}^m {\Gamma(a-(j-1)/\alpha+k_j) \over  \Gamma(a-(j-1)/\alpha+) }
$$
and $C_k^{(\alpha)}(x_1,\ldots,x_m)$ denotes the renormalised Jack polynomials.

Upon normalisation the PDF of cJ$\beta$E \eqref{cJBetaPDF} can be written
\begin{multline}\label{M1}
P_{N,\beta} (\theta_1,\theta_2, \ldots, \theta_N)
:=
{1 \over (2\pi)^N M_{N}(\tp\beta/2+i\tq,\tp\beta/2-i\tq,\beta/2) }
\\
\times
\prod_{j=1}^N e^{-q(\T_j+\pi)}
|1-e^{i\T_j}|^{\beta p}\prod_{1\leq j<k\leq N}
\abs{e^{i\T_k}-e^{i\T_j}}^\beta, \quad \T_j\in [0,2\pi),
\end{multline}
where
\begin{multline}\label{Morris}
M_N(a,b,\lambda)
=
\int_{-1/2}^{1/2} d\theta_1 \cdots \int_{-1/2}^{1/2} d\theta_N
\prod_{j=1}^N e^{\pi i \theta_j (a-b)} |1+e^{2\pi i\theta_j}|^{a+b}
\prod_{1\leq j<k\leq N}|e^{2\pi i\theta_k}-e^{2\pi i\theta_j}|^{2\lambda}
\\
=
\prod_{j=0}^{N-1} {\Gamma(\lambda j+a+b+1)\Gamma(\lambda(j+1)+1) \over 
\Gamma(\lambda j+a+1)\Gamma(\lambda j+b+1)\Gamma(1+\lambda) }.
\end{multline}

For $\beta$ even, it can be shown that the spectral density 
\begin{equation}
\rho_{N,\beta}(\theta)
=
N \int_0^{2\pi} d\theta_2 \cdots \int_0^{2\pi} d\theta_N
P_{N,\beta} (\theta,\theta_2, \ldots, \theta_N),
\end{equation}
can be expressed as \cite[Proposition 13.1.2]{Fo10}
\begin{multline} \label{rhoNbeta}
\rho_{N+1,\beta}(\theta)
\\
= (N+1) {M_N[(\tp-1)\beta/2+i\tq,(\tp+1)\beta/2-i\tq,\beta/2] \over M_{N+1}(\tp\beta/2+i\tq,\tp\beta/2-i\tq,\beta/2) } 
e^{-\tq\theta + iN\beta \theta/2 + iN\beta\pi/2} \abs{1-e^{i\theta}}^{\tp\beta}
\\
\times
\,_2 F_1^{(\beta/2)}(-N, \tp+1-2i\tq/\beta; -N-\tp-2(1+i\tq)/\beta +2 ; (e^{-i\theta})^\beta ).
\end{multline}

Crucially, the hypergeometric function appearing in \eqref{rhoNbeta} has the $\beta$-dimensional integral representation given by \cite[Proposition 13.1.4]{Fo10}
\begin{multline} \label{2F14beta}
\,_2 F_1^{(\beta/2)} (-N, -\tb , 2(\beta-1)/\beta +\ta + 1; (e^{-i\theta})^\beta )
=
\\
{1 \over (2\pi)^\beta} {1 \over M_\beta(N+\ta,\tb,2/\beta) }
\int_{-\pi}^{\pi} d\theta_1 \cdots \int_{-\pi}^{\pi} d\theta_\beta
\prod_{j=1}^\beta e^{i \theta_j (\ta-\tb)/2} |1+e^{i\theta_j}|^{\ta+\tb} (1+(1-e^{-i\theta})e^{i\theta_j} )^N
\\
\times \prod_{1\leq j<k\leq \beta}|e^{i\theta_k}-e^{i\theta_j}|^{4/\beta}
\end{multline}
with
\begin{equation} \label{tab}
\ta = 2\tp + 2/\beta -1,\quad \tb=-\tp-1+2i\tq/\beta.
\end{equation}
This formula substituted in (\ref{rhoNbeta}) will form the starting point of our analysis.

\section{The asymptotic expansions} \label{Sec2}
\subsection{The correlation kernel for $\beta=2$}\label{S3.1}
In this section Proposition \ref{Main1} will be demonstrated for case $\beta=2$. This is accomplished by computing the large $N$ expansion of the corresponding correlation kernel up to the first correction term. 
We begin with the scaled correlation kernel \eqref{CyKernel} on the unit circle given by
\begin{align} \label{eq:s3.1}
S_{N,2}\left(
i\frac{1+\nx}{1-\nx},\,i\frac{1+\ny}{1-\ny}
\right),
\end{align}
with our specific interest being the scaled large $N$ form in the neighbourhood of the spectrum singularity $\theta = 0$. 
As the Routh-Romanovski polynomials \eqref{hyper} play a crucial role in the computation of \eqref{CyKernel}, it is imperative to first understand its large $N$ asymptotics. 
It follows from \eqref{hyper}, the explicit value of $\omega_2(x)$ from (\ref{cb}), and the hypergeometric function transformation formula \cite[Eq. (15.8.7)]{dlmf}
\begin{equation}\label{TF}
{}_2 F_1 (a,b;c;z) = {\Gamma(c)  \Gamma(c - a - b) \over \Gamma(c-a) \Gamma(c-b)} \, {}_2 F_1 (a,b;a+b+1-c;1-z) , \qquad a \in \mathbb Z_{\le 0},
\end{equation}
that for $k=0,1,\cdots,N$ and $p+k>0$, 
\begin{align} \label{hyperA}
\icc_{N-k}\left(
i\frac{1+\nx}{1-\nx}
\right)
=
\left(
\frac{2i}{1-\nx}
\right)^{N-k} 
\ff\left(
-N+k, p+k-iq;
2p+2k ; 
1-\nx
\right).
\end{align}
Introduce now the confluent function
\begin{align}\label{apq}
\tA(j;X)=\frac{(p+k-iq)_j}{(2p+2k)_j}\cff(p+k+j-i q;2p+2k+j;2iX),\quad p+k>0,\,j\geq0.
\end{align}
For subsequent use, note that it follows from properties of the confluent hypergeometric  function in \eqref{apq} that
\begin{align}\label{aeq1}
\frac{d}{dX}\tA(j;X)=2i\tA(j+1;X)
\end{align}
and
\begin{align}\label{aeq2}
2iX(\tA(1;X)-\tA(2;X))
=2(\tp+k)\tA(1;X)-(\tp+k-i\tq)\tA(0;X).
\end{align}
The equation \eqref{aeq1} follows directly from the derivative of \eqref{eq3} and \eqref{aeq2} is the result of the contiguous relations for Kummer's confluent functions (See for example, \cite[Sec. 13.3]{dlmf}).
Of immediate relevance is the fact that applying the formula \eqref{conflimit} to (\ref{apq}) gives the limit
\begin{equation} \label{ILimit}
\lim_{N\to \infty} 
\left(\frac{1-e^{2iX/N}}{2i}
\right)^{N-k}
\icc_{N-k}\left(
i\frac{1+\nx}{1-\nx}
\right)
=
\tc_0^{(p,q,k)}(X),
\end{equation}
where $\tc_0^{(p,q,k)}(X) = \tA(0;X)$ is given in \eqref{tc0}.

Just knowing the limit \eqref{ILimit} is not sufficient to acquire the $1/N$ expansion to the \eqref{eq:s3.1}. An extension to the limits \eqref{conflimit} and \eqref{ILimit} to the first correction is also required. This can be achieved by studying factorial terms appearing from the series representation of \eqref{apq}. In particular, for large $N$ we have
\begin{multline}\label{poc}
(-N+k)_\al
=(-1)^\al N^\al
\bigg(
1-\frac{\al(2k+\al-1)}{2N}
\\
+\frac{\al(\al-1)}{24N^2}\left(3\al^2+(12k-7)\al+(12k^2-12k+2)\right)
+O\left(\frac{1}{N^3}\right)
\bigg)
\end{multline}
valid for fixed $\al\in\mathbb{N}$ and $k\in\mathbb{C}$.  This follows as a special case of the asymptotic formula \cite{TE51} 
\begin{multline}\label{rog}
\frac{\Gamma(z+a)}{\Gamma(z+b)}
=z^{a-b}
\bigg(
1+\frac{1}{2z}(a-b)(a+b-1)
\\
+\frac{1}{12z^2}{{a-b}\choose 2}(3(a+b-1)^2-a+b-1)
+O\left(\frac{1}{z^3}\right)
\bigg)
\end{multline}
when $|z|\to\infty$ and $|\text{arg}\,z|<\pi$, and has been used in the context of studying correction terms to limiting
kernels in random matrix theory in the recent work \cite{FL20}. Plugging the expansion \eqref{poc} in \eqref{hyperA}, we get
(see Appendix \ref{AppendixA1} for the details)
\begin{multline} \label{iccExpand}
\left(\frac{1-e^{2iX/N}}{2i}
\right)^{N-k}
\omega_2\left(\frac{1-e^{2iX/N}}{2i}
\right)
\icc_{N-k}\left(
i\frac{1+\nx}{1-\nx}
\right)
\\
=
\tc_0^{(p,q,k)}(X)
 + {1\over N} \tc_1^{(p,q,k)}(X) + O\left( {1\over N^2} \right),
\end{multline}
where $\tc_0^{(p,q,k)}$ is given in \eqref{tc0} and
\begin{multline}
\tc_1^{(p,q,k)}(X)
=
\frac{1}{2}(2iX)^2\left(\tA(1;X)-\tA(2;X)\right)
\\
-k(2iX)\tA(1;X)+(ik+q)XC_0^{(p,q,k)}(X).
\end{multline}

The term appearing in the numerator  from the Christoffel-Darboux summation in \eqref{CyKernel} prompts the introduction of the functions
\begin{align}
J^{(\tp,\tq,k)}_0(X,Y)
&= X{\tc}^{(p,q,k+1)}_0(X)\tc^{(p,q,k)}_0(Y)- (X\leftrightarrow Y), \label{J0}
\\
J^{(\tp,\tq,k)}_1(X,Y)
&= X\left({\tc}^{(p,q,k+1)}_0(X)\tc^{(p,q,k)}_1(Y)+{\tc}^{(p,q,k+1)}_1(X)\tc^{(p,q,k)}_0(Y)\right)- (X\leftrightarrow Y) \label{J1}
\end{align}
and
\begin{align}\label{Q1}
Q_1^{(\tp,\tq,k)} =\tp(2p+2k+1).
\end{align}
Direct substitution of \eqref{iccExpand} into \eqref{CyKernel} (again, for details see Appendix \ref{AppendixA1}) results in the expansion seen in \eqref{Main1_Expansion} for the case $\beta=2$ with
\begin{align} \label{Beta2_K}
&\hat{K}_{\infty,2}^{(\tp,\tq;k)}(X,Y)
=
\frac{ 2^{2(\tp+k) }
|\Gamma(\tp+k+1-i\tq)|^2}{\pi\Gamma(2\tp+2k+2)\Gamma(2\tp+2k+1)}
e^{-i(X+Y)-\tq\pi}
{ (XY)^{(\tp+k+1)} \over X^2(X-Y) }
J_0^{(\tp,\tq,k)}(X,Y)
\end{align}
and
\begin{multline} \label{Beta2_L}
\hat{L}_{1,2}^{(\tp,\tq;k)}(X,Y)
= \\
\frac{ 2^{2(\tp+k) }
|\Gamma(\tp+k+1-i\tq)|^2}{\pi\Gamma(2\tp+2k+2)\Gamma(2\tp+2k+1)}
e^{-i(X+Y)-\tq\pi}
{ (XY)^{(\tp+1)} \over X^2(X-Y) }
(J_1^{(\tp,\tq,k)}(X,Y)+Q_1^{(\tp,\tq,k)}J_0^{(\tp,\tq,k)}(X,Y)).
\end{multline}

Moreover, upon simplification \eqref{Beta2_L} is seen to be a simple directive operation of \eqref{Beta2_K}, as given by (\ref{Main1_Derivative}), which is stated as the following proposition.
\begin{proposition}\label{prop:3.2}
In the $\beta=2$ case, we have
\begin{align}
\hat{L}_{1,2}^{(\tp,\tq;0)}(X,Y)
=
\tp\left(
X\frac{\partial}{\partial X}+Y\frac{\partial}{\partial Y}+1
\right)
\hat{K}_{\infty,2}^{(\tp,\tq;0)}(X,Y).
\end{align}
\end{proposition}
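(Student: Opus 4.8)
The plan is to exploit the fact that the prefactor in $\hat K^{(\tp,\tq;0)}_{\infty,2}$ is, apart from the constant and the Cauchy factor $(X-Y)^{-1}$, a product of a monomial and an exponential, on which the Euler operator $E := X\frac{\partial}{\partial X} + Y\frac{\partial}{\partial Y}$ acts diagonally. Writing $\hat K^{(\tp,\tq;0)}_{\infty,2} = C\,F(X,Y)\,J_0^{(\tp,\tq,0)}(X,Y)$ with $F(X,Y) = e^{-i(X+Y)-\tq\pi}(XY)^{\tp+1}/(X^2(X-Y))$ and $C$ the constant in \eqref{Beta2_K}, I would first compute the action of $E$ on $F$. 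Since $E$ is a derivation, $E(F)/F$ is the sum of the contributions of the individual factors, giving $E(F) = F[-i(X+Y) + 2(\tp+1) - 2 - 1] = F[-i(X+Y) + 2\tp - 1]$, hence $(E+1)F = F[-i(X+Y) + 2\tp]$. Using that $E$ is a derivation once more,
\begin{equation*}
\left(E + 1\right)(FJ_0) = F\left([-i(X+Y) + 2\tp]J_0 + E(J_0)\right).
\end{equation*}

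Comparing this with the target $\hat L^{(\tp,\tq;0)}_{1,2} = CF(J_1^{(\tp,\tq,0)} + Q_1^{(\tp,\tq,0)}J_0^{(\tp,\tq,0)})$ from \eqref{Beta2_L}, and using $Q_1^{(\tp,\tq,0)} = \tp(2\tp+1)$ together with $2\tp^2 - \tp(2\tp+1) = -\tp$, the proposition reduces, after dividing by $\tp CF$, to the single identity
\begin{equation*}
J_1^{(\tp,\tq,0)}(X,Y) = \tp\left[E\left(J_0^{(\tp,\tq,0)}\right) - i(X+Y)J_0^{(\tp,\tq,0)} - J_0^{(\tp,\tq,0)}\right].
\end{equation*}

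The decisive step is to first bring $\tc_1^{(p,q,k)}$ into a form that makes this derivative structure manifest. Starting from its definition and rewriting $\frac12(2iX)^2(\tA(1;X) - \tA(2;X))$ via the contiguous relation \eqref{aeq2}, the terms proportional to $k$ and to $\tq$ cancel (using $q = \tq$), collapsing the expression to $\tc_1^{(p,q,k)}(X) = i\tp X[2\tA(1;X) - \tA(0;X)]$. Invoking \eqref{aeq1} in the form $\tA(1;X) = (2i)^{-1}\frac{d}{dX}\tc_0^{(p,q,k)}(X)$ together with $\tA(0;X) = \tc_0^{(p,q,k)}(X)$ then yields the clean identity
\begin{equation*}
\tc_1^{(p,q,k)}(X) = \tp X\left(\frac{d}{dX} - i\right)\tc_0^{(p,q,k)}(X).
\end{equation*}

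With this in hand the remaining verification is bookkeeping. Substituting the clean form of $\tc_1$ into the definition \eqref{J1} of $J_1^{(\tp,\tq,0)}$ and, on the other side, expanding $E(J_0^{(\tp,\tq,0)})$ directly from \eqref{J0} with $\frac{d}{dX}\tc_0^{(p,q,k)} = 2i\tA(1;X)$, one checks that the eight monomials in $X$, $Y$, $\tc_0^{(p,q,1)}$, $\tc_0^{(p,q,0)}$ and their first derivatives that appear on the two sides coincide term by term. I expect the main obstacle to be the simplification of $\tc_1^{(p,q,k)}$: the cancellation of the $k$- and $\tq$-dependent pieces is delicate and relies essentially on the precise form of the contiguous relation \eqref{aeq2}. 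Once the clean form of $\tc_1$ is established, the identity for $J_1^{(\tp,\tq,0)}$, and hence the proposition, follows by routine expansion.
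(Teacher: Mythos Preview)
Your proposal is correct and follows essentially the same route as the paper's proof: both reduce the proposition to the identity $J_1^{(\tp,\tq,0)}+Q_1^{(\tp,\tq,0)}J_0^{(\tp,\tq,0)}=\tp\big(-i(X+Y)+2\tp+X\partial_X+Y\partial_Y\big)J_0^{(\tp,\tq,0)}$ (your version with $Q_1$ moved to the right), then use the contiguous relation \eqref{aeq2} to collapse $\tc_1^{(p,q,k)}$ to $i\tp X[2\tA(1;X)-\tA(0;X)]$ and the derivative rule \eqref{aeq1} to finish. Your additional packaging $\tc_1^{(p,q,k)}=\tp X(\partial_X-i)\tc_0^{(p,q,k)}$ is a tidy way to make the final term-matching transparent, but the underlying argument is the same.
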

\begin{proof}
To verify this equation, we must show
\begin{align}
\begin{aligned} \label{k10a}
J_1^{(\tp,\tq,0)}(X,Y)&+Q_1^{(\tp,\tq,0)}J_0^{(\tp,\tq,0)}(X,Y)\\
&=\tp
\left(-iX-iY+2\tp+X\frac{\partial}{ \partial X}+Y\frac{\partial}{\partial Y}\right)J_0^{(\tp,\tq,0)}(X,Y).
\end{aligned}
\end{align}
Now, from the contiguous formula \eqref{aeq2} for the confluent hypergeometric function, one can find 
\begin{align*}
\tc_1^{(\tp,\tq,k)}(X)&=\tp\cdot 2iX\tA(1;X)-\tp\cdot iX\tA(0;X).
\end{align*}
Thus, the left hand side of \eqref{k10a} is given by
\begin{multline} \label{k10b}
\tp\left(2iXY \tilde{A}^{(\tp+1,\tq)}(0;X) \tilde{A}^{(\tp,\tq)}(1;Y)+2iX^2 \tilde{A}^{(\tp+1,\tq)}(1;X) \tilde{A}^{(\tp,\tq)}(0,Y)\right)\\
\qquad\qquad- \tp\left(
iXY+iX^2-(2\tp+1)X
\right) \tilde{A}^{(\tp+1,\tq)}(0;X) \tilde{A}^{(\tp,\tq)}(0;Y)- (X\leftrightarrow Y).
\end{multline}
Whereas the right hand side \eqref{k10a} coincides with \eqref{k10b} with use of the derivative formula \eqref{aeq1}. Therefore the large $N$ expansion of \eqref{eq:s3.1} is
\begin{align}
\begin{aligned} \label{s2nExpand}
S_{N,2}( z(X), z(Y) ) {dz \over dX}
&=
\hat{K}_{\infty,2}^{(\tp,\tq;0)}(X,Y)
\\&\quad+ {1\over N} \tp\left(
X\frac{\partial}{\partial X}+Y\frac{\partial}{\partial Y}+1
\right)
\hat{K}_{\infty,2}^{(\tp,\tq;0)}(X,Y)
+ O\left( {1\over N^2} \right).
\end{aligned}
\end{align}
\end{proof}
Now we prepare to proceed to different symmetries with regards to the random matrix ensemble.
For the cases $\beta=1,4$, computation of the $k$-point correlation function at the hard regime for the circular Jacobi ensemble largely depends on the large $N$ behaviour of kernel
$$
S_{N,\beta}\left(
i\frac{1+\nx}{1-\nx},i\frac{1+\ny}{1-\ny}
\right)
$$
which can be explicitly derived from \eqref{s1n} and \eqref{s4n}. As a prerequisite, knowledge of the large $N$ expansion of $S_{N,2}$ is given in \eqref{s2nExpand} and the remaining terms involving some integrals in  \eqref{s1n}, \eqref{s4n} will be deferred to the Appendices \ref{AppendixA2} and \ref{AppendixA3}.

\subsection{The correlation kernel for $\beta=1$} \label{Sec3.2}

The following variables and functions are relevant to the case $\beta=1$. Denote
\begin{equation}
\eta_1=2\sqrt{\pi}\frac{\Gamma(\tp+2)\Gamma(\tp+5/2)}{|\Gamma((\tp+3)/2+i\tq)|^2},
\quad
\eta_2=-{(\tp+1)e^{-\tq\pi}} 2^{2p+2}
\frac{ |\Gamma(p+2-2iq)|^2 }{ \pi\Gamma(2\tp +4)\Gamma( 2\tp+5 ) }
\end{equation}
and introduce the integral operator
\begin{equation}
\mathcal{J}_o\left[
f(s)
\right](X)
=\int_0^X e^{-is-\tq\pi}s^{\tp+1}f(s)ds.
\end{equation}
For $N$ even and $z(X)$ be defined as in \eqref{zx}, direct large $N$ expansion by following formula \eqref{s1n} gives (for details, please refer to Appendix \ref{AppendixA2})
\begin{equation}\label{s1expand}
S_{N,1}\left(z(X),z(X)\right) {dz \over dX}
=
\hat{K}_{\infty,1}^{(\tp,2\tq;0)}(X,Y)+\frac{1}{N}\hat{L}_{1,1}^{(\tp,2\tq;0)}(X,Y)
+O\left(\frac{1}{N^2}\right)
\end{equation}
where
\begin{align*}
\hat{K}_{\infty,1}^{(\tp,2\tq;0)}(X,Y)
&=
{Y\over X} \hat{K}_{\infty,2}^{(\tp,2\tq;1)}(X,Y)
 +\frac{\eta_2}{X^2}e^{-iY}Y^{\tp+2}\tc_0^{(\tp,2\tq,1)}(Y)\left(
\mathcal{J}_{o}[\tc_0^{(\tp,2\tq,2)}(s)]-\frac{\eta_1}{2}
\right) 
\\
\hat{L}_{1,1}^{(\tp,2\tq;0)}(X,Y)
&=
{Y\over X} \hat{L}_{1,2}^{(\tp,2\tq;1)}(X,Y)
+\frac{\eta_2}{X^2}e^{-iY}Y^{\tp+2}\bigg(
\tc_0^{(\tp,2\tq,1)}(Y)\left(
\mathcal{J}_o[\tc_1^{(\tp,2\tq,1)}(s)]-\frac{\tp(\tp+2)\eta_1}{2}
\right)
\\
&\quad +\tc_1^{(p,2q,1)}(Y)\left(
\mathcal{J}_o[\tc_0^{(\tp,2\tq,1)}(s)]-\frac{\eta_1}{2}
\right)
\bigg) \numberthis \label{L11}
\end{align*}
with $\hat{K}_{\infty,2}^{(\tp,2\tq;1)}$, $\hat{L}_{1,2}^{(\tp,2\tq;1)}$ specified by \eqref{Beta2_K}, \eqref{Beta2_L}.
A direct computation gives the following proposition, as a special case of our main result in Proposition \ref{Main1}.
\begin{proposition}
For $\beta=1$, we have
\begin{align} 
\hat{L}_{1,1}^{(\tp,2\tq;0)}(X,Y)
=
\tp\left(
X\frac{\partial}{\partial X}+Y\frac{\partial}{\partial Y}+1
\right)
\hat{K}_{\infty,1}^{(\tp,2\tq;0)}(X,Y).
\end{align}
\end{proposition}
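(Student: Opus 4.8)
The plan is to split each of $\hat{K}_{\infty,1}^{(\tp,2\tq;0)}$ and $\hat{L}_{1,1}^{(\tp,2\tq;0)}$ as displayed in \eqref{L11} into a Christoffel--Darboux part, proportional to $\frac{Y}{X}\hat{K}_{\infty,2}^{(\tp,2\tq;1)}$ respectively $\frac{Y}{X}\hat{L}_{1,2}^{(\tp,2\tq;1)}$, and an integral part carrying the operator $\mathcal{J}_o$, and to verify the derivative relation $\hat{L}=\tp(X\partial_X+Y\partial_Y+1)\hat{K}$ on each part separately. Writing $D=X\partial_X+Y\partial_Y$ for the Euler operator, this separation is legitimate because $\tp(D+1)$ is linear, while differentiating the Christoffel--Darboux part yields only products of confluent functions and differentiating the integral part yields $\mathcal{J}_o$-integrals together with boundary terms; the two families never mix.

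For the Christoffel--Darboux part I would first note that Proposition \ref{prop:3.2} holds verbatim with the shift $k=0$ replaced by a general $k$, i.e.\ $\hat{L}_{1,2}^{(\tp,\tq;k)}=\tp(D+1)\hat{K}_{\infty,2}^{(\tp,\tq;k)}$. The only inputs to that proof are the contiguous relation \eqref{aeq2} and the derivative rule \eqref{aeq1}, both valid for every $k$, which yield the simplification $\tc_1^{(\tp,\tq,k)}(X)=\tp\bigl(2iX\,\tA(1;X)-iX\,\tA(0;X)\bigr)$ used there. Specialising to $k=1$ and $\tq\mapsto2\tq$ gives $\hat{L}_{1,2}^{(\tp,2\tq;1)}=\tp(D+1)\hat{K}_{\infty,2}^{(\tp,2\tq;1)}$. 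Since $Y/X$ is homogeneous of degree zero we have $D(Y/X)=0$, hence $\tp(D+1)\bigl(\tfrac{Y}{X}g\bigr)=\tfrac{Y}{X}\,\tp(D+1)g$ for every $g$; applying this with $g=\hat{K}_{\infty,2}^{(\tp,2\tq;1)}$ reproduces precisely the Christoffel--Darboux part of $\hat{L}_{1,1}^{(\tp,2\tq;0)}$.

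For the integral part I would apply $D+1$ by the product rule to its factorised form, a function of $X$ built from $X^{-2}\mathcal{J}_o$ times the $Y$-factor $e^{-iY}Y^{\tp+2}\tc_0^{(\tp,2\tq,1)}(Y)$. On the $Y$-factor, \eqref{aeq1} and the explicit form of $\tc_1^{(\tp,2\tq,1)}$ give $Y\partial_Y\bigl(e^{-iY}Y^{\tp+2}\tc_0^{(\tp,2\tq,1)}\bigr)=e^{-iY}Y^{\tp+2}\bigl((\tp+2)\tc_0^{(\tp,2\tq,1)}+\tfrac1\tp\tc_1^{(\tp,2\tq,1)}\bigr)$, the terms in $\pm iY\tc_0^{(\tp,2\tq,1)}$ cancelling. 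On the $X$-factor the fundamental theorem of calculus differentiates $\mathcal{J}_o$ and produces a boundary term. The essential identity, obtained from $\tc_1^{(\tp,2\tq,1)}(s)=\tp s\bigl(\tfrac{d}{ds}-i\bigr)\tc_0^{(\tp,2\tq,1)}(s)$ by one integration by parts (after which the two $\int e^{-is-\tq\pi}s^{\tp+2}\tc_0^{(\tp,2\tq,1)}\,ds$ integrals cancel), is
\[
\mathcal{J}_o[\tc_1^{(\tp,2\tq,1)}](X)=\tp\,e^{-iX-\tq\pi}X^{\tp+2}\tc_0^{(\tp,2\tq,1)}(X)-\tp(\tp+2)\,\mathcal{J}_o[\tc_0^{(\tp,2\tq,1)}](X).
\]
Substituting it and comparing coefficients of the linearly independent $Y$-functions $\tc_0^{(\tp,2\tq,1)}(Y)$ and $\tc_1^{(\tp,2\tq,1)}(Y)$ reduces the claim to identities in $X$ alone among $\mathcal{J}_o[\tc_0^{(\tp,2\tq,1)}]$, the boundary term, and the constants $\eta_1,\eta_2$.

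The main obstacle is exactly this reconciliation on the integral part. The Euler operator applied to $X^{-2}\mathcal{J}_o[\cdot]$ feeds back the non-local quantity $\mathcal{J}_o[\tc_0^{(\tp,2\tq,1)}]$ with one coefficient, while the integration-by-parts identity feeds it back with another; for the two expressions to agree these contributions must be reconciled, and simultaneously the additive constants $\tfrac{\eta_1}{2}$ and $\tfrac{\tp(\tp+2)\eta_1}{2}$ must combine correctly. Confirming that the explicit values of $\eta_1,\eta_2$ and the precise powers of $X$ and $Y$ in \eqref{L11} make all the $\mathcal{J}_o$-terms, boundary terms and constants cancel is the delicate step; by comparison the homogeneity argument for the $Y/X$ prefactor and the general-$k$ extension of Proposition \ref{prop:3.2} are routine.
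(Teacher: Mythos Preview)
Your approach is essentially identical to the paper's: both split into the Christoffel--Darboux piece and the $\mathcal{J}_o$-integral piece, handle the former via the general-$k$ version of Proposition~\ref{prop:3.2} together with the degree-zero homogeneity of $Y/X$, and handle the latter via exactly the integration-by-parts identity you wrote down (in the paper it appears in the rearranged form $\mathcal{J}_o[\tc_1]+\tp(\tp+2)\mathcal{J}_o[\tc_0]=\tp\,e^{-iX-\tq\pi}X^{\tp+2}\tc_0(X)$). The paper then isolates the $\eta_1$-constant terms as a separate third step, reducing them to the pointwise identity $\tc_1^{(\tp,2\tq,1)}(Y)+\tp(\tp+1)\tc_0^{(\tp,2\tq,1)}(Y)=\tp\bigl(\tp+1-iY+Y\tfrac{d}{dY}\bigr)\tc_0^{(\tp,2\tq,1)}(Y)$, which follows immediately from \eqref{ortho3}.

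One caveat that explains why your final reconciliation looked ``delicate'': the displayed formula \eqref{L11} carries a typographical slip. The integrands in $\hat{L}_{1,1}^{(\tp,2\tq;0)}$ should be $\tc_j^{(\tp,2\tq,2)}$, matching the $\tc_0^{(\tp,2\tq,2)}$ that appears inside $\mathcal{J}_o$ in $\hat{K}_{\infty,1}^{(\tp,2\tq;0)}$, rather than $\tc_j^{(\tp,2\tq,1)}$ as printed; the paper's own proof uses index $2$ throughout for these integrals (and also carries an extra $+\tp(2\tp+3)\,\tc_0^{(\tp,2\tq,1)}(Y)\mathcal{J}_o[\tc_0^{(\tp,2\tq,2)}]$ contribution coming from the $1/N$ expansion of $\gamma_{N-2}$, which \eqref{L11} omits). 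With those corrections the boundary term from $X\partial_X\mathcal{J}_o[\tc_0^{(\tp,2\tq,2)}]$ matches your integration-by-parts identity on the nose, and the reconciliation is routine rather than delicate; in particular the explicit numerical values of $\eta_1,\eta_2$ play no role.
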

\begin{proof}
The proof is divided into three parts according to the number of terms in the kernels.
The first part is easily verified similar to the proof of Proposition \ref{prop:3.2}. The second term is to verify
\begin{align*}
&\hc_0^{(p,2q,1)}(Y)\mathcal{J}_o[\hc_1^{(p,2q,2)}(s)]+\hc_1^{(1)}(Y)\mathcal{J}_o[\hc_0^{(p,2q,2)}(s)]+\tp(2\tp+3)\hc_0^{(p,2q,1)}(Y)\mathcal{J}[\hc_0^{(p,2q,2)}(s)]\\
&\quad=\tp\left(\tp+1-iY+Y\frac{d}{dY}\right)(\hc_0^{(p,2q,1)}(Y)\mathcal{J}_o[\hc_0^{(p,2q,2)}(s)])+\tp\hc_0^{(p,2q,1)}(Y)X\frac{d}{dX}\mathcal{J}_o[\hc_0^{(p,2q,2)}(s)].
\end{align*}
By making use of the confluent hypergeometric function, one has
\begin{align}\label{ortho3}
\begin{aligned}
&\frac{d}{dY}\hc_0^{(p,2q,1)}(Y)=2i\tilde{A}^{(\tp+1,2\tq)}(1;Y),\\
&\hc_1^{(p,2q,1)}(Y)=2i\tp Y\tilde{A}^{(\tp+1,2\tq)}(1;Y)-i\tp Y\tilde{A}^{(\tp+1,2\tq)}(0;Y).
\end{aligned}
\end{align}
Substituting them into the above equation, the task becomes to verify that
\begin{align*}
\mathcal{J}_o[\hc_1^{(p,2q,2)}(s)]+\tp(\tp+2)\mathcal{J}_o[\hc_0^{(p,2q,2)}(s)]=\tp X\frac{d}{dX}\mathcal{J}_o[\hc_0^{(p,2q,2)}(s)]=\tp e^{-iX-\tq\pi}X^{\tp+2}\hc_0^{(p,2q,2)}(X).
\end{align*} 
Since when $X=0$, both sides of the equation are equal to zero, we can show the equation by taking the derivative on both sides again and one gets
\begin{align*}
\hc_1^{(p,2q,2)}(X)=-i\tp X\hc_0^{(p,2q,2)}(X)+\tp X\frac{d}{dX}\hc_0^{(p,2q,2)}(X),
\end{align*}
which in fact could be verified by recognising that
\begin{align*}
\hc_1^{(p,2q,2)}=2 i\tp X\tilde{A}^{(\tp+2;2\tq)}(1;X)-i\tp X\tilde{A}^{(\tp+2;2\tq)}(0;X), \quad \frac{d}{dX}\hc_0^{(p,2q,2)}(X)=2i\tilde{A}^{(\tp+2;2\tq)}(1;X).
\end{align*}
An equivalent form of the last term is to demonstrate
\begin{align*}
\hc_1^{(p,2q,1)}(Y)+\tp(\tp+1)\hc_0^{(p,2q,1)}(Y)=\tp\left(\tp+1-iY+Y\frac{d}{dY}\right)\hc_0^{(p,2q,1)}(Y),
\end{align*}
and it is established by using \eqref{ortho3}.
\end{proof}

Analogous expansion to \eqref{s1expand} for the case $N$ odd and $\beta=1$ too can be made explicit from the kernel \eqref{s1n} using the formula \cite[Eq. 6.112]{Fo10}  
\begin{multline}
S^\text{odd}_{N,1}(x,y) = S_{N,1}(x,y)\bigg\rvert_{N \to N-1}
+ \omega_1(y) {\icc_{N-1}(y) \over 2\tilde{s}_{N-1} }
\\
-{1\over 2} \gamma_{N-3} \tilde{s}_{N-3} {\omega_1(y) \over \tilde{s}_{N-1} }
\int_{-\infty}^\infty \omega_1(t)\text{sgn}(x-t) \left(
\icc_{N-1}(t)\icc_{N-2}(y)-\icc_{N-2}(t)\icc_{N-1}(y)
\right),
\end{multline}
where
\begin{align*}
\tilde{s}_k:=\frac{1}{2}\int_{-\infty}^\infty \omega_1(x)\icc_{k}(x)dx.
\end{align*}
This yields the same large $N$ expansion \eqref{s1expand} with identical leading and correction terms.

\subsection{The correlation kernel for $\beta=4$}
Here we define
\begin{align*}
\mathcal{J}_{s,0}&=\mathcal{J}_s\left[
\tc_0^{(2\tp,\tq,0)}(2Y)\tc_0^{(2\tp,\tq,1)}(2s)
\right],
\\
 \mathcal{J}_{s,1}&=\mathcal{J}_s\left[
\tc_1^{(2\tp,\tq,0)}(2Y)\tc_0^{(2\tp,\tq,1)}(2s)+\tc_0^{(2\tp,\tq,0)}(2Y)\tc_1^{(2\tp,\tq,1)}(2s)\right]+2\tp(4\tp+1)\mathcal{J}_{s,0},
\end{align*}
where the integral operator $\mathcal{J}_s$ is given by
\begin{align*}
\mathcal{J}_s[f(s)](X,Y)
=
\tp 
\frac{2^{4\tp} |\Gamma(2\tp+1-i\tq)|^2}{\pi\Gamma(4\tp+1)\Gamma(4\tp+2)}
e^{-\tq\pi-2iY}Y^{2\tp+1}\int_0^X e^{-2is}s^{2\tp}f(s)ds.
\end{align*}
For large $N$, we have
\begin{equation}\label{beta4Prop}
S_{N,4}\left(z(X),z(X)\right) {dz \over dX}
=
\hat{K}_{\infty,4}^{(2\tp,\tq;0)}(X,Y)
+\frac{1}{N}\hat{L}_{1,4}^{(2\tp,\tq;0)}(X,Y)
+O\left(\frac{1}{N^2}\right),
\end{equation}
where
\begin{align*}
\hat{K}_{\infty,4}^{(2\tp,\tq;0)}(X,Y)&=
\frac{Y}{2X}\hat{K}_{\infty,2}^{(2\tp,\tq;0)}(2X,2Y)+ {1\over X^2} \mathcal{J}_{s,0}
,
\\
\hat{L}_{1,4}^{(2\tp,\tq;0)}(X,Y)&=
\frac{Y}{2X}\hat{L}_{1,2}^{(2\tp,\tq;0)}(2X,2Y) 
+ {1\over X^2} \mathcal{J}_{s,1}.  \numberthis \label{L14}
\end{align*}
The term $O(1/N^2)$ term in (\ref{beta4Prop}), $\hat{L}_{2,4}^{(2\tp,\tq;0)}(X,Y)$ say, is made explicit in \eqref{L24} below.


The correction terms \eqref{L11}, \eqref{L14} exhibit the structure \eqref{Main1_Derivative}, which means we can state the following proposition.
\begin{proposition}
For $\beta=4$ case, we have
\begin{align} 
\hat{L}_{1,4}^{(2\tp,\tq;0)}(X,Y)
=
2\tp\left(
X\frac{\partial}{\partial X}+Y\frac{\partial}{\partial Y}+1
\right)
\hat{K}_{\infty,4}^{(2\tp,\tq;0)}(X,Y).
\end{align}
\end{proposition}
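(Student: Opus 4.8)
The plan is to mimic the structure of the $\beta=1$ proof, treating separately the two pieces of the kernel in \eqref{L14}: the ``$\beta=2$ piece'' $\tfrac{Y}{2X}\hat{K}_{\infty,2}^{(2\tp,\tq;0)}(2X,2Y)$ and the ``integral piece'' $X^{-2}\mathcal{J}_{s,0}$. I will show that applying $2\tp(X\partial_X + Y\partial_Y + 1)$ to each piece of $\hat{K}_{\infty,4}^{(2\tp,\tq;0)}$ reproduces the corresponding piece of $\hat{L}_{1,4}^{(2\tp,\tq;0)}$. Two elementary facts drive the $\beta=2$ piece. First, the scaling operator commutes with the doubling of arguments: if $F(X,Y)=G(2X,2Y)$ then, by the chain rule, $(X\partial_X+Y\partial_Y+1)F = [(u\partial_u+v\partial_v+1)G](2X,2Y)$. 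Second, the prefactor $Y/(2X)$ is homogeneous of degree zero, so $(X\partial_X+Y\partial_Y)(Y/(2X))=0$ and the operator passes through it. Combining these with Proposition \ref{prop:3.2} applied with $\tp\mapsto 2\tp$ gives $2\tp(X\partial_X+Y\partial_Y+1)\big[\tfrac{Y}{2X}\hat{K}_{\infty,2}^{(2\tp,\tq;0)}(2X,2Y)\big]=\tfrac{Y}{2X}\hat{L}_{1,2}^{(2\tp,\tq;0)}(2X,2Y)$, which is exactly the first term of \eqref{L14}.

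For the integral piece I would write $X^{-2}\mathcal{J}_{s,0}=X^{-2}G(Y)I(X)$, where $G(Y)$ collects the $Y$-dependent factors of $\mathcal{J}_s$ together with $\tc_0^{(2\tp,\tq,0)}(2Y)$, and $I(X)=\int_0^X e^{-2is}s^{2\tp}\tc_0^{(2\tp,\tq,1)}(2s)\,ds$. Expanding $2\tp(X\partial_X+Y\partial_Y+1)[X^{-2}GI]$ and likewise expanding $X^{-2}\mathcal{J}_{s,1}$ from its definition, the desired equality reduces, after clearing the common $X^{-2}$, to
\[
G(Y)\,\Phi(X) + I(X)\,\Psi(Y) = 0,
\]
where $\Phi(X)=2\tp X I'(X)-4\tp(2\tp+1)I(X)-I_1(X)$ and $\Psi(Y)=2\tp Y G'(Y)-\tilde{G}_1(Y)$, with $I_1$ and $\tilde{G}_1$ the analogues of $I$ and $G$ obtained by replacing $\tc_0^{(2\tp,\tq,1)}\mapsto\tc_1^{(2\tp,\tq,1)}$ and $\tc_0^{(2\tp,\tq,0)}\mapsto\tc_1^{(2\tp,\tq,0)}$ respectively. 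It then suffices to establish the two separate identities $\Psi(Y)=2\tp(2\tp+1)G(Y)$ and $\Phi(X)=-2\tp(2\tp+1)I(X)$.

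The decisive tool for both is the reduction formula from the proof of Proposition \ref{prop:3.2}, which (with $\tp\mapsto 2\tp$) holds for every $k$ and, upon using \eqref{aeq1}, can be written as
\[
\tc_1^{(2\tp,\tq,k)}(X)=2\tp X\Big(\frac{d}{dX}-i\Big)\tc_0^{(2\tp,\tq,k)}(X),
\]
its derivation from \eqref{aeq2} showing that the apparent $k$-dependence cancels. The $Y$-identity follows by substituting this formula with $k=0$ into $\tilde{G}_1$ and computing $YG'(Y)$ directly from $G(Y)=C\,e^{-\tq\pi-2iY}Y^{2\tp+1}\tc_0^{(2\tp,\tq,0)}(2Y)$: the terms proportional to $\tc_0'(2Y)$ cancel, leaving precisely the constant multiple $2\tp(2\tp+1)G(Y)$.

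The main obstacle is the $X$-identity. Substituting the reduction formula with $k=1$ into $I_1$ and writing $\tc_0'(2s)=\tfrac12\frac{d}{ds}\tc_0^{(2\tp,\tq,1)}(2s)$, I would integrate by parts in $\int_0^X e^{-2is}s^{2\tp+1}\tc_0'(2s)\,ds$. The boundary term at $s=0$ vanishes (as $2\tp+1>0$), the boundary term at $s=X$ reproduces exactly $2\tp X I'(X)$, and the surviving oscillatory integral $\int_0^X e^{-2is}s^{2\tp+1}\tc_0^{(2\tp,\tq,1)}(2s)\,ds$ cancels against the remaining contribution, yielding $I_1(X)=2\tp X I'(X)-2\tp(2\tp+1)I(X)$, equivalently $\Phi(X)=-2\tp(2\tp+1)I(X)$. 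Verifying this cancellation carefully is the crux, and it is exactly what forces the constant $2\tp(4\tp+1)$ appearing in the definition of $\mathcal{J}_{s,1}$. With both identities in hand, the two pieces sum to give \eqref{L14} in the claimed derivative form, completing the proof.
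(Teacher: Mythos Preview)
Your proof is correct and follows essentially the same route as the paper. Both arguments split the kernel into the $\beta=2$ piece and the integral piece, dispatch the first via Proposition~\ref{prop:3.2} with $\tp\mapsto 2\tp$, and reduce the second to the key identity $\tc_1^{(2\tp,\tq,k)}(X)=2\tp X\big(\tfrac{d}{dX}-i\big)\tc_0^{(2\tp,\tq,k)}(X)$; the only cosmetic difference is that the paper verifies the integral identity by differentiating both sides in $X$ and checking the value at $X=0$, whereas you run the equivalent integration by parts in the forward direction and explicitly separate the $X$- and $Y$-dependent pieces into the constants $\pm 2\tp(2\tp+1)$.
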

\begin{proof}
 This will be demonstrated explicitly for the case $\beta=4$  by a direct derivative computation of $\hat{K}_{\infty,4}^{(2\tp,\tq;0)}(X,Y)$. Immediate from \eqref{Main1_Derivative} with $\beta=2$, we have
$$
\frac{Y}{2X}\hat{L}_{1,2}^{(2\tp,\tq;0)}(2X,2Y)=
2\tp\left(X\frac{\partial}{\partial X}+Y\frac{\partial }{ \partial Y}+1\right)\left(
\frac{Y}{2X}\hat{K}_{\infty,2}^{(2\tp,\tq;0)}(2X,2Y)
\right).
$$
Therefore it is sufficient to show
\begin{align}\label{js}
{1\over X^2}\mathcal{J}_{s,1}=2\tp\left(X\frac{\partial}{ \partial X}+Y\frac{\partial}{ \partial Y}+1\right)
{1\over X^2}\mathcal{J}_{s,0},
\end{align}
which is equivalent to
\begin{align*}
\bC_1^{(0)}(Y)\bJ_s\left[\bC_0^{(1)}(s)\right]&+\bC_0^{(0)}(Y)\bJ_s\left[\bC_1^{(1)}(s)\right]+2\tp(4\tp+1)\bC_0^{(0)}(Y)\bJ_s\left[\bC_0^{(1)}(s)\right]\\
&\quad=2\tp\left((2\tp-2iY)\bC_0^{(0)}(Y)\bJ_s\left[
\bC_0^{(1)}(s)\right]+2\tp Y\frac{d}{dY}\bC_0^{(0)}(Y)\bJ_s\left[\bC_0^{(1)}(s)
\right]\right)\\
&\quad\quad+2\tp e^{-2iX}X^{2\tp+1}\bC_0^{(0)}(Y)\bC_0^{(1)}(X)
\end{align*}
with 
$
\bJ_s\left[
f(s)
\right
]=\int_0^X e^{-2is}s^{2\tp}f(s)ds
$
and $\bC_j^{(k)}(X) = \tc_j^{(2\tp,\tq,k)}(2X)$ for $j=0,1,2$.
By substituting the formulas \eqref{aeq1}, \eqref{aeq2},
\begin{align}\label{symp2}
\begin{aligned}
&\frac{d}{dY}\bC_0^{(0)}(Y)=\frac{d}{dY}\tilde{A}^{(2\tp,\tq)}(0;2Y)=4i\tilde{A}^{(2\tp,\tq)}(1;2Y),\\
&\bC_1^{(0)}(Y)=8i\tp Y\tilde{A}^{(2\tp,\tq)}(1;2Y)-4i\tp Y\tilde{A}^{(2\tp,\tq)}(0;2Y)
\end{aligned}
\end{align}
into the above equality, it can shown that the difference of the derivatives of the LHS and RHS of \eqref{symp2} vanish. Additionally, since both the LHS and RHS of \eqref{symp2} vanish at $X=0$,  \eqref{js} can now be concluded.
\end{proof}


\subsection{The density for $\beta \in  2\mathbb{N}$} \label{Sec4}

In (\ref{rhoNbeta}) with the substitution  \eqref{2F14beta},  the $\theta$ and $N$ dependence are factorised in the integrand.
 Thus at the spectrum singularity, the large $N$ asymptotics of the $\beta$-dimensional integral \eqref{2F14beta} can be obtained by Taylor expansion without the need of a saddle point analyses. Details for the large $N$ expansions of the normalisation terms, identifications of the leading large $N$ terms and their simplifications are given in Appendix \ref{AppendixB}. This working leads to Proposition \ref{rhoExpand}, and moreover shows (see also \cite{Li17})
\begin{equation} \label{RhoInf}
\rho_\infty(\theta) = C^{(\tp,\tq)}_\beta e^{i\beta\theta/2} \theta^{\tp\beta}
\,_1 F_1^{(\beta/2)} (p+1-2i\tq/\beta ; 2\tp+2; (-i\T)^\beta ),
\end{equation}
where the generalised hypergeometric function has the $\beta$-dimensional integral representation given by
\begin{multline}
\,_1 F_1^{(\beta/2)} (p+1-2i\tq/\beta ; 2\tp+2; (-i\T)^\beta )
=
\int_{-\pi}^{\pi} d\theta_1 \cdots \int_{-\pi}^{\pi} d\theta_\beta
\prod_{j=1}^\beta e^{i\theta_j (\ta-\tb)/2} |1+e^{i\theta_j}|^{\ta+\tb} 
e^{i\T e^{i\T_j} }
\\
\times
\prod_{1\leq j<k\leq N}|e^{i\theta_k}-e^{i\theta_j}|^{4/\beta}
\end{multline}
with $\ta,\tb$ are given by \eqref{tab} and 
\begin{equation} \label{Cbeta}
C^{(\tp,\tq)}_\beta
=
{ (\beta/2)^{\tp\beta} \over 2\pi }
{ \Gamma(1+\beta/2)\Gamma(\tp\beta/2+i\tq+1)\Gamma(\tp\beta/2-i\tq+1)
\over \Gamma(\tp\beta+\beta/2+1)\Gamma(\tp\beta+1) }.
\end{equation}
As a consequence, with the better tuned scaling given by $x/(N+p)$, for large $N$ we have
\begin{equation}
{1\over N + \tp} \rho_{N,\beta}\left(
{\theta \over N + \tp} \right)
=
\rho_{\infty,\beta}(\theta) 
+ O\left(
{1\over N^2} \right).
\end{equation}


%
%
%

\section*{Acknowledgements} 
This work is part of a research program supported by the Australian Research Council (ARC) through the ARC Centre of Excellence for Mathematical and Statistical frontiers (ACEMS). AKT would like to thank Mario Kieburg for many helpful discussions in preparation for this manuscript.

\appendix\label{appendixa}
\section*{Appendix A}
\renewcommand{\thesection}{A} 
\setcounter{equation}{0}
\subsection{Case: $\beta=2$}\label{AppendixA1}

The asymptotic expansion of \eqref{iccExpand} requires the expansion of each factor appearing on the right hand side. 
Note that \eqref{apq} has the power series in the form
\begin{align}\label{eq3}
(2iX)^j\tA(j;X)
=\sum_{\al=0}^\infty\frac{(\tp+k-i\tq)_\al}{(2\tp+2k)_\al}\frac{(2iX)^\al}{\al!}\left(\al\cdots(\al-j+1)\right).
\end{align}
From a simple Taylor expansion of the term $\left(\frac{1-e^{2iX/N}}{2i}
\right)^{N-k}$ and substituting  the formula \eqref{rog} into the series representation of \eqref{hyperA} gives
\begin{multline}\label{prop2}
\left(\frac{1-e^{2iX/N}}{2i}
\right)^{N-k}\icc_{N-k}\left(-\cot\frac{X}{N}\right)
= 
\tA(0;X)
\\
+ {1\over N} \frac{1}{2}(2iX)^2\left(\tA(1;X)-\tA(2;X)\right)-k(2iX)\tA(1;X)
+ O\left( {1\over N^2}\right).
\end{multline}

The expansion of $\omega_2$ begins by noting
\begin{align}\label{CyWeightA}
\omega_2(x):=(1-ix)^c(1+ix)^{\bc}=(1+x^2)^{-p}\exp(2q\,\arctan x),\quad c=-p+iq,\quad x\in\mathbb{R}.
\end{align}
Use of  the trigonometric formula
\begin{align*}
\arctan\left(
i\frac{1+\nx}{1-\nx}
\right)=\frac{i}{2}\log\frac{1+i\cot\frac{X}{N}}{1-i\cot\frac{X}{N}}, 
\quad
\cot\frac{X}{N}=i\frac{e^{iX/N}+e^{-iX/N}}{e^{iX/N}-e^{-iX/N}}
\end{align*}
tells us that
\begin{align*}
\arctan\left(
i\frac{1+\nx}{1-\nx}
\right)=\frac{i}{2}\log \left(-e^{-2iX/N}\right)=\frac{i}{2}\log(-1)+\frac{X}{N}.
\end{align*}
Thus we see from \eqref{CyWeightA} that
\begin{align} \label{WeightN}
\omega_2\left(
i\frac{1+\nx}{1-\nx}
\right)^{1/2}
=
\left( \sin\frac{X}{N}
\right)^{N+\tp}e^{\tq\left(\frac{X}{N}-\frac{\pi}{2}\right)}.
\end{align}
Combining \eqref{prop2} with \eqref{WeightN} yields \eqref{iccExpand}.

We now take account of the scaling by \eqref{zx}. The expansion of the normalisation \eqref{hn} appearing in \eqref{CyKernel} can be achieved by using the formula \eqref{rog}. We have
\begin{multline} \label{CyNormalisation}
\frac{1}{h_{N-k}}\bigg|_{p=N+\tp,\,q=\tq}=h^{(\tp+k,\tq)}N^{2\tp+2k-1} 
\\
\times \bigg (
1+ \frac{\tp(2\tp+2k-1)}{N}+\frac{(\tp+k-1)(2\tp+2k-1)(6\tp^2-\tp-k)}{6N^2}+\ot
\bigg ),
\end{multline}
where 
\begin{align*}
h^{(\tp+k,\tq)}=\frac{2^{2(\tp+k)-2}|\Gamma(\tp+k-i\tq)|^2}{\pi\Gamma(2\tp+2k)\Gamma(2\tp+2k-1)}.
\end{align*}
Finally,
\begin{align}\label{eq2}
\frac{\sin X/N\sin Y/N}{\sin(X-Y)/N}=\frac{XY}{N(X-Y)}\left(
1-\frac{XY}{3N^2}+\ot
\right).
\end{align}

Therefore, by substituting \eqref{iccExpand}, \eqref{CyNormalisation}, \eqref{eq2} into \eqref{CyKernel}, we get
\begin{multline}
S_{N-k,2}\left(
i\frac{1+\nx}{1-\nx},i\frac{1+\ny}{1-\ny}
\right)
= \\
\frac{e^{-i(X+Y)-\tq\pi}h^{(\tp+k+1,\tq)}(XY)^{\tp+k+1}}{N(X-Y)}
\bigg ((J^{(\tp,\tq,k)}_0
+\frac{1}{N}\left(
J_1^{(\tp,\tq,k)}+Q_1^{(\tp,\tq,k)}J_0^{(\tp,\tq,k)}
\right) \\
+\frac{1}{N^2}\left(
J_2^{(\tp,\tq,k)}+Q_1^{(\tp,\tq,k)}J_1^{(\tp,\tq,k)}+Q_2^{(\tp,\tq,k)}J_0^{(\tp,\tq,k)}
\right)+\ot \bigg ).
\end{multline}
Here $J_0,J_1,Q_1$ are given in \eqref{J0}, \eqref{J1}, \eqref{Q1},
\begin{equation} \label{Q2}
Q_2^{(\tp,\tq,k)}=-\frac{XY}{3}+\frac{(\tp+k)(2\tp+2k+1)(6\tp^2-\tp-k-1)}{6}
\end{equation}
and
\begin{multline} \label{J2}
J^{(\tp,\tq,k)}_2
=
 X \bigg ({\tc}^{(\tp,\tq,k+1)}_2(X)\tc^{(\tp,\tq,k)}_0(Y) \\ +{\tc}^{(\tp,\tq,k+1)}_1(X)\tc^{(\tp,\tq,k)}_1(Y)+{\tc}^{(\tp,\tq,k+1)}_0(X)\tc^{(\tp,\tq,k)}_2(Y)\bigg )- (X\leftrightarrow Y)
\end{multline}
with
\begin{multline}
\tc_2^{(\tp,\tq,k)}(X)
=
\frac{1}{8}(2iX)^4\left(
\tA(2;X)-2\tA(3;X)+\tA(4;X)\right)
\\
+\frac{1}{6}(2iX)^3\left(\tA(1;X)-3(k+1)\tA(2;X)+(3k+2)\tA(3;X)
\right)
\\
+\frac{1}{4}(2iX)^2\left(
-2k\tA(1;X)+2k(k+1)\tA(2;X)
\right)
\\
+ \tc_1^{(\tp,\tq,k)}(X)-\tc_0^{(\tp,\tq,k)}(X)
+\left(
\frac{(ik+\tq)^2}{2}-\frac{k+\tp}{6}
\right)X^2\tc_0^{(\tp,\tq,k)}(X).
\end{multline}
Together with the Jacobian as a result of the change of variables \eqref{zx}, this working shows the large $N$ expansion of \eqref{CyKernel} up to the first is given by \eqref{Main1_Expansion},
and furthermore shows that explicit form of the $O(1/N^2)$ term is given by
\begin{multline} \label{Beta2_L2}
\hat{L}_{2,2}^{(\tp,\tq;0)}(X,Y) \\ = f(X,Y)\left(J_2^{(\tp,\tq,0)}(X,Y) +Q_1^{(\tp,\tq,0)}J_1^{(\tp,\tq,0)}(X,Y) +\left(Q_2^{(\tp,\tq,0)}+\frac{1}{3X^2}\right)J_0^{(\tp,\tq,0)}(X,Y) \right),
\end{multline}
where $f(X,Y)$ is equal to the prefactor of $J_0{(\tp,\tq,0)}$ in \eqref{Beta2_K}.
The expressions for $Q_2^{(\tp,\tq,0)}$, $J_2^{(\tp,\tq,0)}(X,Y) $ are given in \eqref{Q2}, \eqref{J2}.

\subsection{Case: $\beta = 1$} \label{AppendixA2}

Recall that for $\beta=1$ we take $\Re c = -N-p$ and $\Im c = 2q$. For the case $N$ even, the integral term \eqref{s1n} evaluates as \cite[Eq. 6.113]{Fo10}
\begin{align} \label{a1}
\int_{-\infty}^\infty \text{sgn}(x-t)\icc_{N-2}(t)\omega_1(t)dt=2\int_{-\infty}^x \icc_{N-2}(t)\omega_1(t)dt
-\left(
\int_{-\infty}^\infty \omega_1(t)dt
\right)\prod_{j=0}^{N/2-2}\frac{\gamma_{2j}}{\gamma_{2j+1}}
\end{align}
where
\begin{align} \label{a2}
\prod_{j=0}^{N/2-2}\frac{\gamma_{2j}}{\gamma_{2j+1}}&=
\frac{\Gamma\left(
\frac{2\tp+5}{2}
\right)\Gamma\left(
\frac{\tp+2}{2}
\right)\Gamma\left(
\frac{\tp+3}{2}
\right)}{\sqrt{\pi}\left|\Gamma\left(
\frac{\tp+3+2i\tq}{2}
\right)\right|^2} \frac{\Gamma\left(
\frac{N-1}{2}
\right)\Gamma\left(
\frac{N+\tp+1}{2}
\right)\left|
\Gamma\left(
\frac{N+\tp+1+2i\tq}{2}
\right)\right|^2
}{\Gamma\left(
\frac{N+\tp}{2}
\right)\Gamma\left(
\frac{N+2\tp+3}{2}
\right)\left|
\Gamma\left(
\frac{N+\tp+1}{2}
\right)\right|^2
}.
\end{align}
Recall $\gamma_j$ given below \eqref{s4n}. The orthogonality property of the Routh-Romanovski polynomials \eqref{RROrt} implies
\begin{align} \label{w1int}
\int_{-\infty}^\infty \omega_1(t)dt
&=2^{1-p}\pi\frac{\Gamma(p)}{\Gamma\left(
\frac{p+1+iq}{2}
\right)\Gamma\left(
\frac{p+1-iq}{2}
\right)}
\\
&=
\sqrt{\pi}\frac{\Gamma\left(
\frac{N+\tp}{2}
\right)\Gamma\left(
\frac{N+\tp+1}{2}
\right)^2}{\Gamma\left(
\frac{N+\tp+1}{2}
\right)\Gamma\left(
\frac{N+\tp+1+2i\tq}{2}
\right)\Gamma\left(
\frac{N+\tp+1-2i\tq}{2}
\right)}.
\end{align}
The second line in \eqref{w1int} is the result of the duplication formula for the gamma function. By combining \eqref{a2}, \eqref{w1int} with \eqref{a1} and making use of \eqref{rog}, we have
\begin{align*}
\int_{-\infty}^\infty \omega_1(t)dt=\sqrt{\pi}\frac{\Gamma\left(
\frac{N+\tp}{2}
\right)\Gamma\left(
\frac{N+\tp+1}{2}
\right)^2}{\Gamma\left(
\frac{N+\tp+1}{2}
\right)\Gamma\left(
\frac{N+\tp+1+2i\tq}{2}
\right)\Gamma\left(
\frac{N+\tp+1-2i\tq}{2}
\right)}
\end{align*}
and hence
\begin{align*}
\int_{-\infty}^\infty \icc_{N-2}(t)\omega_1(t)dt&=
\frac{\Gamma\left(
\frac{\tp+2}{2}
\right)\Gamma\left(
\frac{2\tp+5}{2}
\right)\Gamma\left(
\frac{\tp+3}{2}
\right)}{\Gamma\left(
\frac{\tp+3+2i\tq}{2}
\right)\Gamma\left(
\frac{\tp+3-2i\tq}{2}
\right)}\frac{\Gamma\left(
\frac{N-1}{2}
\right)}{\Gamma\left(
\frac{N+2\tp+3}{2}
\right)}\\&
=\eta_1N^{-\tp-2}\left(
1-\frac{\tp(\tp+2)}{N}+\frac{(\tp+2)(\tp+3)(3\tp^2+\tp+1)}{6N^2}+\ot
\right).
\end{align*}
Moreover the expansion of the Routh-Romanovski polynomials as given by \eqref{prop2} gives
\begin{multline*}
\left.\omega_1\left(-\cot\frac{Y}{N}\right)\icc_{N-k}\left(-\cot\frac{Y}{N}\right)\right|_{p=N+\tp,\,q=2\tq} =\\
\left(\frac{Y}{N}\right)^{\tp+k+1}e^{-{\tq\pi}-iY}(-1)^{k}\ \left(\tc_0^{(\tp,2\tq,k)}(Y)+\frac{1}{N}\tc_1^{(\tp,2\tq,k)}(Y)+\frac{1}{N^2}\tc_2^{(\tp,2\tq,k)}(Y)+\ot\right)
\end{multline*}
and therefore after a change of variables the first term on the left hand side of \eqref{a1} becomes
\begin{multline*}
\int_{-\infty}^{-\cot (X/N)}\icc_{N-2}(t)\omega_1(t)dt
=\frac{1}{N^{\tp+2}}\int_0^X e^{-\tq\pi-is}s^{\tp+1} \\ \times
\left(
\tc_0^{(\tp,2\tq,2)}(s)+\frac{1}{N}\tc_1^{(\tp,2\tq,2)}(s)+\frac{1}{N^2}\left(\tc_2^{(\tp,2\tq,2)}(s)+\frac{s^2}{3}\hc_0^{(2)}(s)\right)+\ot
\right)ds.
\end{multline*}
Therefore the large $N$ expansion of the kernel \eqref{s1n} under the scaling \eqref{zx} results in \eqref{s1expand}.

\subsection{Case: $\beta = 4$} \label{AppendixA3}
The process for the asympotic expansion of the symplectic kernel \eqref{s4n} is similar. With $\Re c = -2N-2p$ and $\Im c = q$, from the known expansion \eqref{prop2} one can find
\begin{multline} \label{icc4}
\int_{-\cot (X/N)}^\infty\icc_{2N-1}(t)\frac{\omega_2(t)}{(\omega_4(t))^{1/2}}dt
=-\int_{-\infty}^{-\cot (X/N)} \icc_{2N-1}(t)\omega_1(t)dt
\\
=\frac{e^{-\frac{\tq\pi}{2}}}{N^{2\tp+1}}
\int_0^X e^{-2is}s^{2\tp}
\left(
\tc_0^{(2\tp,\tq,1)}(2s)+\frac{1}{N}\tc_1^{(2\tp,\tq,1)}(2s)+\frac{1}{N^2}\tc_2^{(2\tp,\tq,1)}(2s)+\ot
\right).
\end{multline}
Together with the expansion
\begin{align*}
\gamma_{2N-1}=\frac{2\tp}{h_{2N-1}}=2\tp h^{(2\tp+1,\tq)}N^{4\tp+1}\left(
1+\frac{2\tp(4\tp+1)}{N}+\frac{\tp(4\tp+1)(24\tp^2-2\tp-1)}{3N^2}+\ot
\right)
\end{align*}
which is the result of \eqref{rog}, \eqref{icc4} and \eqref{s4n}, this implies \eqref{beta4Prop}. Moreover, the term proportional to $1/N^2$ in \eqref{beta4Prop} is explicitly given by
\begin{equation} \label{L24}
\hat{L}_{2,4}^{(2\tp,\tq;0)}(X,Y)
=
\frac{Y}{2X}\hat{L}_{2,2}^{(2\tp,\tq;0)}(2X,2Y)
+
{X^2-Y^2\over 6} \frac{Y}{2X}\hat{K}_{\infty,2}^{(2\tp,\tq;0)}(2X,2Y)
+ {1\over X^2} \mathcal{J}_{s,2},
\end{equation}
where
\begin{multline}
\mathcal{J}_{s,2}=\mathcal{J}_s\left[
\bC_2^{(2\tp,\tq,0)}(2Y)\tc_0^{(2\tp,\tq,1)}(2s)+\tc_1^{(2\tp,\tq,0)}(2Y)\tc_1^{(2\tp,\tq,1)}(2s)+\tc_0^{(2\tp,\tq,0)}(2Y)\tc_2^{(2\tp,\tq,1)}(2s)
\right]
\\
+2\tp(4\tp+1)\mathcal{J}_{s,1}+\tp(4\tp+1)(24\tp^2-2\tp-1)\mathcal{J}_{s,0}.
\end{multline}

\subsection{Case: $\beta$ even} \label{AppendixB}

The large $N$ expansion of \eqref{rhoNbeta} comes in two parts -- the first is the hypergeometric function component and second is the normalisation terms expressed in terms of Morris integrals.

\smallskip
\noindent
{\it Expanding the hypergeometic function.}
The feasibility of expanding the functional part of \eqref{rhoNbeta} requires expressing the $N$-dimensional integral to one with $\beta$ dimensions via known duality transformations. For negative integer $a,b$, from the relation \cite[Proposition 13.1.7]{Fo10}
$$
\,_2 F_1^{(1/\alpha)}(a,b;c;(t)^m)
=
{ \,_2 F_1^{(1/\alpha)}(a,b;a+b+1+\alpha(m-1)-c;(1-t)^m) \over 
\,_2 F_1^{(1/\alpha)}(a,b;a+b+1+\alpha(m-1)-c;(1)^m)
},
$$
(in the case $m=1$ this reduces to (\ref{TF})) the hypergeometric function in \eqref{rhoNbeta} becomes
\begin{equation} \label{B1}
{ \,_2 F_1^{(\beta/2)}(-N,\tp+1-2i\tq/\beta;2\tp+2;(1-e^{-i\theta})^\beta) \over 
\,_2 F_1^{(\beta/2)}(-N,\tp+1-2i\tq/\beta;2\tp+2;(1)^\beta)
}.
\end{equation}
With the parameters $\ta,\tb$ defined in \eqref{tab}, the numerator in \eqref{B1} has the $\beta$-dimensional integral representation given by
\begin{multline} \label{B2}
I_{N+1}(\theta) :=
{1 \over (2\pi)^\beta M_\beta(\ta,\tb,2/\beta) }
\int_{-\pi}^{\pi} d\theta_1 \cdots \int_{-\pi}^{\pi} d\theta_\beta
\prod_{j=1}^\beta e^{i\theta_j (\ta-\tb)/2} |1+e^{i\theta_j}|^{\ta+\tb} 
(1+(1-e^{-i\theta})e^{i\theta_j} )^N
\\
\times
\prod_{1\leq j<k\leq N}|e^{i\theta_k}-e^{i\theta_j}|^{4/\beta}.
\end{multline}
The normalisations appearing in \eqref{B1}, \eqref{B2} becomes $M_\beta(N+\ta,\b,2/\beta)$ as a result of the Morris integral \eqref{Morris}.

For some chosen function $f = f(\T_1,\ldots,\T_\beta)$, define 
\begin{multline}\label{B3}
\calI [f](\T) =
\int_{-\pi}^{\pi} d\theta_1 \cdots \int_{-\pi}^{\pi} d\theta_\beta
f(\T_1,\ldots,\T_\beta)
\prod_{j=1}^\beta e^{i\theta_j (\ta-\tb)/2} |1+e^{i\theta_j}|^{\ta+\tb} 
e^{i\T e^{i\T_j} }
\\
\times
\prod_{1\leq j<k\leq N}|e^{i\theta_k}-e^{i\theta_j}|^{4/\beta}
\end{multline}
and set $I_\infty(\T) = \calI[1](\T)$.

For large $N$ and from simple Taylor series expansion,
\begin{equation} \label{B4}
I_{N+1}\left( {\theta \over N+1} \right)
=
I_\infty(\T) + {1\over N} \left\{
\left( -i\T + {\T^2 \over 2} \right) \calI \left[ \sum_{j=1}^\beta e^{i\T_j} \right](\T)
+ {\T^2 \over 2} \calI \left[ \sum_{j=1}^\beta e^{2i\T_j} \right](\T)
\right\}
+ O\left( {1\over N^2} \right).
\end{equation}
With $p\in \{1,\ldots,\beta\}$, note that
\begin{align*}
{\partial \over \partial \T_p} \prod_{j=1}^\beta e^{i\T e^{i\T_j} }
=
-\T e^{i\T_p} \prod_{j=1}^\beta e^{i\T e^{i\T_j} },
\quad
{\partial^2 \over \partial \T_p^2} \prod_{j=1}^\beta e^{i\T e^{i\T_j} }
=
(-\T e^{i\T_p}+\T^2 e^{2i\T_p}) \prod_{j=1}^\beta e^{i\T e^{i\T_j} }.
\end{align*}
Hence
\begin{multline} \label{B5}
\calI \left[ -\T \sum_{j=1}^\beta e^{i\T_j} \right](\T)
=
\int_{-\pi}^{\pi} d\theta_1 \cdots \int_{-\pi}^{\pi} d\theta_\beta
\prod_{j=1}^\beta e^{i\theta_j (\ta-\tb)/2} |1+e^{i\theta_j}|^{\ta+\tb} 
\\ \times
\left( \sum_{p=1}^\beta {\partial \over \partial \T_p} \prod_{j=1}^\beta e^{i\T e^{i\T_j} } \right)
\prod_{1\leq j<k\leq \beta}|e^{i\theta_k}-e^{i\theta_j}|^{4/\beta}
\\
= -i\ta \beta I_\infty(\T) + i(\ta-\tb) \calI \left[\sum_{p=1}^\beta {1 \over 1 + e^{i\T_p}} \right](\T).
\end{multline}
The second equality in \eqref{B5} is the result of integration by parts. Similarly,
\begin{multline} \label{B6}
\calI \left[ -i\T \sum_{j=1}^\beta e^{i\T_j} +\T^2 \sum_{j=1}^\beta e^{2i\T_j} \right](\T)
=
\int_{-\pi}^{\pi} d\theta_1 \cdots \int_{-\pi}^{\pi} d\theta_\beta
\prod_{j=1}^\beta e^{i\theta_j (\ta-\tb)/2} |1+e^{i\theta_j}|^{\ta+\tb} 
\\ 
\times
\left( \sum_{p=1}^\beta {\partial^2 \over \partial \T_p^2} \prod_{j=1}^\beta e^{i\T e^{i\T_j} } \right)
\prod_{1\leq j<k\leq \beta}|e^{i\theta_k}-e^{i\theta_j}|^{4/\beta}
\\
= -i(\ta+\tb) \beta \T I_\infty(\T) 
+ i(\ta+\tb) \T \calI \left[\sum_{p=1}^\beta {1 \over 1 + e^{i\T_p} } \right](\T)
+ i\ta \T \calI \left[\sum_{p=1}^\beta e^{i\T_p} \right](\T)
\\
+ i {2\over \beta} \T \calI \left[ \sum_{\underset{j\neq k}{j,k=1}}^\beta 
e^{i\T_j} \left( { e^{i\T_j} \over e^{i\T_k}-e^{i\T_j} } - { e^{-i\T_j} \over e^{-i\T_k}-e^{-i\T_j} } \right)
\right](\T).
\end{multline}
Using the fact that the sum in the last term \eqref{B6} is invariant when interchanging the indices $\theta_j$ and $\T_k$, it follows that
$$
\calI \left[ \sum_{\underset{j\neq k}{j,k=1}}^\beta 
e^{i\T_j} \left( { e^{i\T_j} \over e^{i\T_k}-e^{i\T_j} } - { e^{-i\T_j} \over e^{-i\T_k}-e^{-i\T_j} } \right)
\right](\T)
= (\beta-1) \calI \left[\sum_{p=1}^\beta e^{i\T_p} \right](\T).
$$
Combining \eqref{B5}, \eqref{B6} with \eqref{B4} and observing 
$$
i\T I_\infty'(\T) = -\T \calI \left[\sum_{p=1}^\beta e^{i\T_p} \right](\T),
$$
we get
\begin{equation}\label{B7}
I_{N+1}\left( {\T \over N+1} \right)
=
I_\infty (\T) + {1\over N} [(\tq+i\beta/2+i\tp\beta/2)\T I_\infty(\T) + \tp \T I_\infty'(\T)]
+ O\left( {1\over N^2} \right).
\end{equation}

\noindent
\smallskip
{\it  Expanding the Morris integral.}
As a corollary of the multiplication formula for the gamma function we have
$$
\prod_{j=0}^{\beta-1} \Gamma(2j/\beta + z)
=
(2\pi)^{\beta/2-1} (\beta/2)^{1-\beta(1+2z)/2} \Gamma(\beta z/2)\Gamma(\beta (1+z)/2).
$$ 
The Morris integral \eqref{Morris} specifying the normalisation \eqref{2F14beta} becomes
\begin{multline}
{1\over M_\beta(N+\ta,\tb,2/\beta)}
=
{ (2\pi)^{\beta/2-1} \over (\beta/2)^{3\beta/2-1} }
\left( \prod_{j=0}^{\beta-1}
{ \Gamma(1+2/\beta) \over \Gamma( 2(j+1)/\beta +1 ) }
\right)
\\
\times
{ \Gamma(\beta/2 -\tp\beta/2+i\tq)\Gamma(-\tp\beta/2+i\tq)\Gamma(N\beta/2+\tp+\beta/2+1)
\Gamma(N\beta/2+\tp\beta+1)
\over
\Gamma(\beta (N-1)/2 + \beta\tp/2 + i\tq +1)\Gamma(N\beta/2 + \beta\tp/2 + i\tq+1)
}.
\end{multline}
It follows from this that  the ratio of the Morris integrals appearing in \eqref{rhoNbeta} evaluates to
\begin{equation}
{ 
\Gamma(N\beta/2+\tp\beta/2 + i\tq +1) \Gamma(N\beta/2+(\tp-1)\beta/2 + i\tq +1)
\Gamma(\tp\beta/2-i\tq+1)\Gamma(1+\beta/2)
\over
\Gamma(\beta N/2 + \tp\beta  +1) \Gamma(N\beta/2 +\beta/2 +1) \Gamma( (\tp-1)\beta/2 + i\tq +1)
}.
\end{equation}
Hence with use of \eqref{rog} we get
\begin{multline} \label{BGamma}
{1 \over M_\beta(\tq,\tb,2/\beta)}
{M_N[(\tp-1)\beta/2+i\tq,(\tp+1)\beta/2-i\tq,\beta/2] \over M_N(\tp\beta/2+i\tq,\tp\beta/2-i\tq,\beta/2) } 
=
{ C_\beta^{(\tp,\tq)} \over M_\beta(\ta,\tb,2/\beta) }
{ \Gamma(N\beta/2 + p\beta + \beta/2 + 1) \over \Gamma(N\beta/2 + \beta/2 + 1)}
\\
=
{ C_\beta^{(\tp,\tq)} \over M_\beta(\ta,\tb,2/\beta) } \left( {N\beta \over 2} \right)^{\tp\beta}
\left[
1 + {1\over N}(\tp^2\beta+\tp\beta +\tp) + O\left( {1\over N^2}\right)
\right],
\end{multline}
where $C_\beta^{(\tp,\tq)}$ is given by \eqref{Cbeta}.

The $\beta$-dimensional integral $I_\infty(\T)$ given below \eqref{B3} can be expressed in terms of a generalised hypergeometric function \cite[Chapter 13 Q4(i)]{Fo10}
$$
I_\infty(\T) = (2\pi)^\beta M_\beta(\ta,\tb,2/\beta) 
\,_1 F_1^{(\beta/2)}(\tp+1-2i\tq/\beta;2\tp+2;(-i\T)^\beta ).
$$
and this together with \eqref{BGamma} and \eqref{B7} establishes Proposition \ref{rhoExpand}.

\small
\bibliographystyle{abbrv}

\end{document}